\newtheorem{proposition}{Proposition}
\newtheorem{theorem}{Theorem}
\newtheorem{definition}{Definition}
\newtheorem{lemma}{Lemma}
\newcommand{\abs}[1]{\lvert #1 \rvert}
\DeclareMathOperator{\polylog}{polylog}
\DeclareMathOperator{\last}{last}
\DeclareMathOperator{\lp}{lp}
\DeclareMathOperator{\Delete}{delete}
\DeclareMathOperator{\level}{level}
\DeclareMathOperator{\xbwl}{xbwb}
\DeclareMathOperator{\pDAG}{pDAG}
\DeclareMathOperator{\access}{access}
\DeclareMathOperator{\rank}{rank}
\DeclareMathOperator{\select}{select}
\DeclareMathOperator{\getbits}{bits}
\DeclareMathOperator{\leafpush}{leaf\_push}
\newcommand{\XBWL}{\emph{XBW-b}\xspace}
\newcommand{\stind}{\ensuremath{\mathcal{S}}}
\newcommand{\Left}{\ensuremath{\text{left}}}
\newcommand{\Right}{\ensuremath{\text{right}}}
\newcommand{\id}{\ensuremath{\text{id}}}
\newcommand{\Root}{\ensuremath{\text{root}}}
\newcommand{\Copy}{\ensuremath{\text{copy}}}
\newcommand{\New}{\ensuremath{\text{new\_node}}}
\newcommand{\lambertW}{\ensuremath{\mathcal{W}}}
\newcommand{\lookup}{\ensuremath{\texttt{lookup}}}
\newcommand{\triefold}{\ensuremath{\texttt{trie\_fold}}}
\newcommand{\update}{\ensuremath{\texttt{update}}}
\newcommand{\compress}{\ensuremath{\texttt{compress}}}
\newcommand{\decompress}{\ensuremath{\texttt{decompress}}}
\newcommand{\Put}{\ensuremath{\texttt{put}}}
\newcommand{\Get}{\ensuremath{\texttt{get}}}
\newcommand*{\titlePage}{\begingroup 

\onecolumn
\centering 

\begin{center}
  \includegraphics[width=200pt]{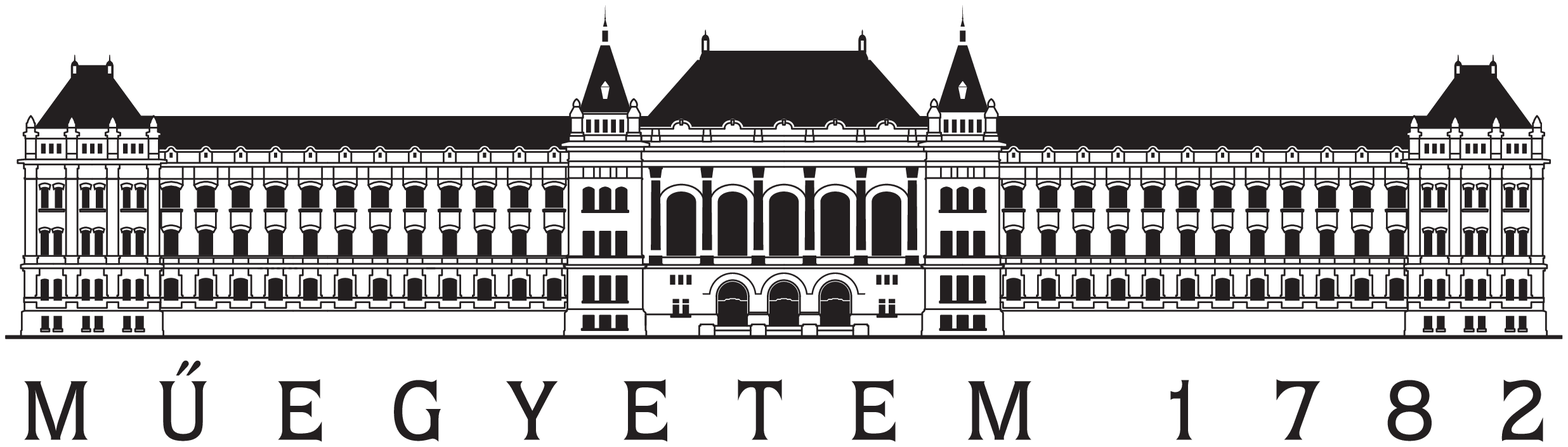}
\end{center}

\vspace*{6\baselineskip} 

{\Huge Compressing IP Forwarding
     Tables:\\[0.3\baselineskip]Towards Entropy Bounds and
     Beyond}\\[0.2\baselineskip] 

\vspace*{\baselineskip} 

{\Huge A revised technical report}\\[0.2\baselineskip]

\scshape 

\vspace*{4\baselineskip} 

Manuscript originally appeared as\\ 
\vspace*{\baselineskip} 

\slshape 
G.~R\'etv\'ari, J.~Tapolcai, A.~K\H{o}r\"osi, A.~Majd\'an, and Z.~Heszberger,\\
  ``Compressing {IP} forwarding tables: Towards entropy bounds and beyond,''\\
  ACM SIGCOMM, 2013.\\

\scshape 
\vspace*{\baselineskip} 
This technical report is a revised version of that manuscript,\\containing a
number of important corrections\\ to the original text.\\[\baselineskip]

\vspace*{10\baselineskip} 

Authors \\[\baselineskip]
{\Large G\'abor~R\'etv\'ari\\J\'anos~Tapolcai\\Attila~K\H{o}r\"osi\\Andr\'as
  Majd\'an\\Zal\'an Heszberger\par} 
\vspace*{\baselineskip} 
{\itshape Department of Telecommunications and Media Informatics\\
  Budapest University of Technology and Economics \par} 

\vspace*{1\baselineskip}

Budapest, Hungary, 2014\par 

\vfill 

\endgroup}
\begin{document}

\conferenceinfo{SIGCOMM'13,} {August 12--16, 2013, Hong Kong, China.} 
 \CopyrightYear{2013} 
 \crdata{978-1-4503-2056-6/13/08} 
 \clubpenalty=10000 
 \widowpenalty = 10000

\titlePage

\title{Compressing IP Forwarding Tables:\\Towards Entropy Bounds and Beyond}

\numberofauthors{1}
\author{
  \alignauthor
  G\'abor~R\'etv\'ari, J\'anos~Tapolcai, Attila~K\H{o}r\"osi, Andr\'as
  Majd\'an, Zal\'an Heszberger\\
  \affaddr{Department of Telecommunications and Media Informatics\\
    Budapest University of Technology and Economics
    \email{\{retvari,tapolcai,korosi,majdan,heszi\}@tmit.bme.hu}%
  }%
}%

\maketitle

\begin{abstract}
  Lately, there has been an upsurge of interest in compressed data
  structures, aiming to pack ever larger quantities of information into
  constrained memory without sacrificing the efficiency of standard
  operations, like random access, search, or update.  The main goal of this
  paper is to demonstrate how data compression can benefit the networking
  community, by showing how to squeeze the IP Forwarding Information Base
  (FIB), the giant table consulted by IP routers to make forwarding
  decisions, into information-theoretical entropy bounds, with essentially
  zero cost on longest prefix match and FIB update.  First, we adopt the
  state-of-the-art in compressed data structures, yielding a static
  entropy-compressed FIB representation with asymptotically optimal lookup.
  Then, we re-design the venerable prefix tree, used commonly for IP lookup
  for at least 20 years in IP routers, to also admit entropy bounds and
  support lookup in optimal time and update in nearly optimal time.
  Evaluations on a Linux kernel prototype indicate that our compressors
  encode a FIB comprising more than 440K prefixes to just about 100--400
  KBytes of memory, with a threefold increase in lookup throughput and no
  penalty on FIB updates.
\end{abstract}

\category{C.2.1}{Computer-Communication Networks}{Network Architecture and Design}
[Store and forward networks]
\category{E.4}{Coding and Information Theory}{Data compaction and
compression}



\keywords{IP forwarding table lookup; data compression; prefix tree}

\section{Introduction}
\label{sec:intro}

Data compression is widely used in processing large volumes of information.
Not just that convenient compression tools are available to curtail the
memory footprint of basically any type of data, but these tools also come
with 
theoretical guarantee that the compressed size is indeed minimal, in terms of
some suitable notion of \emph{entropy} \cite{Cover:1991:EIT:129837}.
Correspondingly, data compression has found its use in basically all aspects
of computation and networking practice, ranging from text or multimedia
compression \cite{Ziviani:2000:CKN:619057.621588} to the very heart of
communications protocols \cite{rfc1979} and operating systems \cite{zfs}.

Traditional 
compression algorithms 
do not admit standard queries, like pattern matching or random access, right
on the compressed form, which severely hinders their applicability.  An
evident workaround is to decompress the data prior to accessing it, but this
pretty much defeats the whole purpose.  The alternative is to maintain a
separate index dedicated solely to navigate the content, but the sheer size
of the index can become prohibitive in many cases
\cite{Hon:2010:CIR:1875737.1875761, Navarro:2007:CFI:1216370.1216372}.

It is no surprise, therefore, that the discovery of compressed self-indexes
(or, within the context of this paper, \emph{compressed data structures})
came as a real breakthrough \cite{Ferragina:2000:ODS:795666.796543}.  A
compressed data structure is, loosely speaking, an entropy-sized index on
some data that allows the complete recovery of the original content as well
as fast queries on it \cite{Hon:2010:CIR:1875737.1875761,
  Navarro:2007:CFI:1216370.1216372, Ferragina:2000:ODS:795666.796543, libcds,
  Makinen:2008:DES:1367064.1367072, Ziviani:2000:CKN:619057.621588,
  SilvadeMoura:2000:FFW:348751.348754, Raman:2002:SID:545381.545411,
  Ferragina:2007:CRS:1240233.1240243}.  What is more, as the compressed form
occupies much smaller space than the original representation, and hence is
more friendly to CPU cache, the time required to answer a query is often far
less than if the data had not been compressed
\cite{Ziviani:2000:CKN:619057.621588, SilvadeMoura:2000:FFW:348751.348754}.
\emph{Compressed data structures, therefore, turn out one of the rare cases
  in computer science where there is no space-time trade-off}.

Researchers and practitioners working with big data were quick to recognize
this win-win situation and came up with compressed self-indexes, and
accompanying software tools, for a broad set of applications; from
compressors for sequential data like bitmaps (\texttt{RRR}),
\cite{Raman:2002:SID:545381.545411}) and
text documents (\texttt{CGlimpse} \cite{Ferragina:2000:ODS:795666.796543},
wavelet trees \cite{Ferragina:2007:CRS:1240233.1240243}); compression
frontends to information retrieval systems 
(\texttt{MG4J} \cite{citeulike:2046731}, \texttt{LuceneTransform}
\cite{luctrans}) and dictionaries (\texttt{MG} \cite{WittenMoffatBell99}); to
specialized tools for structured data, like XML\slash HTML\slash DOM
(\texttt{XGRIND} \cite{Tolani02xgrind:a}, \texttt{XBZIPINDEX}
\cite{Ferragina:2009:CIL:1613676.1613680}),
graphs (\texttt{WebGraph} \cite{webgraph}), 3D models (\texttt{Edgebreaker}
\cite{Rossignac99edgebreaker:connectivity}), genomes and protein sequences
(\texttt{COMRi} \cite{Sun:2003:CCM:937976.938085}), multimedia, source and
binary program code, formal grammars, etc. \cite{WittenMoffatBell99}.  With
the advent of replacements for the standard file compression tools
(\texttt{LZgrep} \cite{Navarro:2005:LBS:1087482.1087483}) and generic
libraries (\texttt{libcds} \cite{libcds}), we might be right at the verge of
seeing compressed data structures go mainstream.

Curiously, this revolutionary change has gone mostly unnoticed in the
networking community, even though this field is just one of those affected
critically by 
skyrocketing volumes of data.  A salient example of this trend is the case of
the IP Forwarding Information Base (FIB), used by Internet routers to make
forwarding decisions, which has been literally deluged by the rapid growth of
the routed IP address space. 
Consequently, there has been a flurry of activity to find space-efficient FIB
representations \cite{1200040, 253403, 772439, 752164,
  Bando:2012:FBI:2369183.2369204, DBLP:conf/infocom/GuptaPB00,
  Degermark:1997:SFT:263105.263133, Dharmapurikar:2003:LPM:863955.863979,
  draves:99, Eatherton:2004:TBH:997150.997160, Han:2010:PGS:1851182.1851207,
  Hasan:2005:DPM:1080091.1080116, Ioannidis:2005:LCD:1114718.1648670,
  Liu:2012:EFC:2427036.2427039, Sklower91atree-based,
  Song:2005:SST:1099544.1100365, Srinivasan:1998:FIL:277858.277863,
  Uzmi:2011:SPN:2079296.2079325, Zec:2012:DTB:2378956.2378961,
  Waldvogel:1997:SHS:263105.263136}, yet very few of these go beyond ad-hoc
schemes and compress to information-theoretic limits, let alone come with a
convenient notion of FIB entropy. Taking the example of compressing IP FIBs
as a use case, thus, our aim in this paper is to popularize compressed data
structures to the networking community.

\subsection{FIB Compression}
\label{sec:fib-comp}

There are hardly any data structures in networking affected as compellingly
by the growth of the Internet as the IP FIB.  Stored in the line card memory
of routers, the FIB maintains an association from every routed IP prefix to
the corresponding next-hop, and it is queried on a packet-by-packet basis at
line speed (in fact, it is queried \emph{twice} per packet, considering
reverse path forwarding check).  Lookup in FIBs is not trivial either, as
IP's longest prefix match rule requires the most specific entry to be found
for each destination address.  Moreover, as Internet routers operate in an
increasingly dynamic environment \cite{Elmokashfi:2012:BCE:2317330.2317349},
the FIB needs to support hundreds of updates to its content each second.

As of 2013, the number of active IPv4 prefixes in the Default Free Zone is
more than 440,000 and counting, and IPv6 quickly follows suit \cite{potaroo}.
Correspondingly, FIBs continue to expand both in size and management burden.
As a quick reality check, the Linux kernel's \texttt{fib\_trie} data
structure \cite{772439}, when filled with this many prefixes, occupies tens
of Mbytes of memory, takes several minutes to download to the forwarding
plane, and is still heavily debated to scale to multi-gigabit speeds
\cite{1709673}.  Commercial routers suffer similar troubles, aggravated by
the fact that line card hardware is more difficult to upgrade than software
routers.

Many have argued that mounting FIB memory tax will, sooner or later, present
a crucial data-plane performance bottleneck for IP routers \cite{rfc4984}.
But even if the scalability barrier will not prove impenetrable
\cite{godfrey-smaller}, the growth of the IP forwarding table still poses
compelling difficulties.  Adding further fast memory to line cards boosts
silicon footprint, heat production, and power budget, all in all, the
CAPEX\slash OPEX associated with IP network gear, and forces operators into
rapid upgrade cycles \cite{Zhao:2010:RSO:1878170.1878174,
  Khare:2010:ETG:1878170.1878183}.  Large FIBs also complicate maintaining
multiple virtual router instances, each with its own FIB, on the same
physical hardware \cite{DBLP:conf/icnp/SongKHL09} and build up huge control
plane to data plane delay for FIB resets
\cite{Francois:2005:ASI:1070873.1070877}. 

Several recent studies have identified \emph{FIB aggregation} as an effective
way to reduce FIB size, extending the lifetime of legacy network devices and
mitigating the Internet routing scalability problem, at least temporarily
\cite{Zhao:2010:RSO:1878170.1878174, Khare:2010:ETG:1878170.1878183}.  FIB
aggregation is a technique to transform some initial FIB representation into
an alternative form that, supposedly, occupies smaller space but still
provides fast lookup.  Recent years have seen an impressive reduction in FIB
size: from the initial 24 bytes\slash prefix (prefix trees
\cite{Sklower91atree-based}), use of hash-based schemes
\cite{Waldvogel:1997:SHS:263105.263136, Bando:2012:FBI:2369183.2369204},
path- and level-compressed multibit tries
\cite{Srinivasan:1998:FIL:277858.277863, 772439, 752164}, tree-bitmaps
\cite{Eatherton:2004:TBH:997150.997160}, etc., have reduced FIB memory tax to
just about 2--4.5 bytes/prefix \cite{Degermark:1997:SFT:263105.263133,
  Zec:2012:DTB:2378956.2378961, Uzmi:2011:SPN:2079296.2079325}.  Meanwhile,
lookup performance has also improved \cite{772439}.

The evident questions ``Is there an ultimate limit in FIB aggregation?''  and
``Can FIBs be reduced to fit in fast ASIC SRAM\slash CPU cache entirely?''
have been asked several times before \cite{752164,
  Srinivasan:1998:FIL:277858.277863, draves:99,
  Degermark:1997:SFT:263105.263133}.  In order to answer these questions, we
need to go beyond conventional FIB aggregation to find \emph{new compressed
  FIB data structures that encode to entropy-bounded space and support lookup
  and update in optimal time}.
We coined the term \emph{FIB compression} to mark this ambitious undertaking
\cite{Retvari:2012:CIF:2390231.2390232}.  Accordingly, this paper is
dedicated to the theory and practice of FIB compression.

\subsection{Our Contributions}
\label{sec:contrib}

Our contributions on FIB compression are two-fold: based on the labeled tree
entropy measure of Ferragina \emph{et al.}
\cite{Ferragina:2009:CIL:1613676.1613680} we specify a compressibility metric
called FIB entropy, then we propose two entropy-compressed FIB data
structures.

Our first FIB encoder, \XBWL, is a direct application of the state-of-the-art
in compressed data structures to the case of IP FIBs.  \XBWL compresses a
contemporary FIB to the entropy limit of just $100$--$300$ Kbytes and, at the
same time, provides longest prefix match in asymptotically optimal time.
Unfortunately, it turns out that the relatively immature hardware and
software background for compressed string indexes greatly constrain the
lookup and update performance of \XBWL.  Therefore, we also present a
practical FIB compression scheme, called the trie-folding algorithm.

\emph{Trie-folding} is in essence a ``compressed'' reinvention of prefix
trees, a commonly used FIB implementation in IP routers, and therefore
readily deployable with minimal or no modification to router ASICs
\cite{ezchip}.  We show that trie-folding compresses to within a small
constant factor of FIB entropy, supports lookup in strictly optimal time, and
admits updates in nearly optimal time for FIBs of reasonable entropy (see
later for precise definitions). The viability of trie-folding will be
demonstrated on a Linux prototype and an FPGA implementation.  By extensive
tests on real and synthetic IP FIBs, we show that trie-folding supports tens
of millions of IP lookups and hundreds of thousands updates per second, in
less than $150$--$500$ Kbytes of memory.

\subsection{Structure of the Paper}
\label{sec:struct}

The rest of the paper is organized as follows.  In the next section, we
survey standard FIB representation schemes and cast compressibility metrics.
In Section~\ref{sec:mbw} we describe \XBWL, while in
Section~\ref{sec:trie-fold} we introduce trie-folding and we establish
storage size bounds.  Section~\ref{sec:num-eval} is devoted to numerical
evaluations and measurement results, Section~\ref{sec:related} surveys
related literature, and finally Section~\ref{sec:conc} concludes the paper.

\section{Prefix Trees and Space Bounds}
\label{sec:redundancy}

Consider the sample IP routing table in Fig.~\ref{tab:textual}, storing
address-prefix-to-next-hop associations in the form of an index into a
neighbor table, which maintains neighbor specific information, like next-hop
IP address, aliases, ARP info, etc.  Associate a unique \emph{label}, taken
from the alphabet $\Sigma$, with each next-hop in the neighbor table.  We
shall usually treat labels as positive integers, complemented with a special
\emph{invalid label} $\perp \in \Sigma$ to mark blackhole routes.  Let $N$
denote the number of entries in the FIB and let $\delta =\abs{\Sigma}$ be the
number of next-hops.  An IP router does not keep an adjacency with every
other router in the Internet, thus $\delta \ll N$. Specifically, we assume
that $\delta$ is $O(\polylog N)$ or $O(1)$, which is in line with reality
\cite{Teixeira:2003:SPD:948205.948247, 5928930}.  Finally, let $W$ denote the
width of the address space in bits (e.g., $W=32$ for IPv4).

\begin{figure}
  \centering
  \subtable[]{%
    \begin{small}%
      \begin{tabular}[t]{r|c}%
        prefix & label\\
        \hline
        -/0 & 2\\
        0/1 & 3\\
        00/2 & 3\\
        001/3 & 2\\
        01/2 & 2\\
        011/3 & 1\\
      \end{tabular}%
    \end{small}%
\label{tab:textual}}%
  \subfigure[]{%
    \begin{tikzpicture}
      [baseline=(root.base),scale=.22, minimum size=8,inner sep=1pt,
      node/.style={anchor=center,circle,draw=black!60,font=\scriptsize}] {
        \node (0_2) at (1.6,0) [node] {$2$};%
        \node (1_2) at (4.8,0) [node] {$1$};%

        \node (i_0) at (.8,3) [node] {$3$}; %
        \path[->,>=latex] (i_0) 
        edge
        node[above right,minimum size=1ex,scale=.7]{$1$}%
        (0_2);%
        \node (i_1) at (4,3) [node] {$2$}; %
        \path[->,>=latex] (i_1) 
        edge
        node[above right,minimum size=1ex,scale=.7]{$1$}%
        (1_2);%

        \node (i2_0) at (2.4,6) [node] {$3$}; %
        \path[->,>=latex] (i2_0) edge
        node[above left,minimum size=1ex,scale=.7]{$0$}%
        (i_0) edge
        node[above right,minimum size=1ex,scale=.7]{$1$}%
        (i_1); %

        \node (root) at (4,9) [node] {$2$}; %
        \path[->,>=latex] (root) edge
        node[above left,minimum size=1ex,scale=.7]{$0$}%
        (i2_0) 
        ;%
      };
    \end{tikzpicture}\label{fig:pretix-trie}}%
  \hskip5pt%
  \subfigure[]{%
    \begin{tikzpicture}
      [baseline=(root.base),scale=.22, minimum size=8,inner sep=1pt,
      node/.style={anchor=center,circle,draw=black!60,font=\scriptsize}] {
        \node (0_1) at (0,0) [node] {$3$};%
        \node (1_2) at (4.8,0) [node] {$1$};%
        \node (i_0) at (.8,3) [node] {}; %
        \path[->,>=latex] (i_0) edge (0_1) 
        ;%
        \node (i_1) at (4,3) [node] {}; %
        \path[->,>=latex] (i_1) 
        edge (1_2);%

        \node (i2_0) at (2.4,6) [node] {}; %
        \path[->,>=latex] (i2_0) edge (i_0) edge (i_1); %

        \node (root) at (4,9) [node] {$2$}; %
        \path[->,>=latex] (root) edge (i2_0) 
        ;%
      };
    \end{tikzpicture}\label{fig:ortc}}%
  \hskip3pt%
  \subfigure[]{%
    \begin{tikzpicture}
      [baseline=(root.base),scale=.22, minimum size=8,inner sep=1pt,
      node/.style={anchor=center,circle,draw=black!60,font=\scriptsize}] {
        \node (0_1) at (0,2.5) [node] {$3$};%
        \node (0_2) at (1.6,2.5) [node] {$2$};%
        \node (1_1) at (3.2,2.5) [node] {$2$};%
        \node (1_2) at (4.8,2.5) [node] {$1$};%

        \node (i2_0) at (2.4,6) [node] {}; %
        \path[->,>=latex] (i2_0) edge (0_1) edge (0_2) edge (1_1) edge (1_2); %
        \node (i2_1) at (5.6,6) [node] {$2$}; %

        \node (root) at (4,9) [node] {}; %
        \path[->,>=latex] (root) edge (i2_0) edge (i2_1);%
        \node (a) at (0, 0.3) {};
      };
    \end{tikzpicture}\label{fig:multibit}}%
  \subfigure[]{%
    \begin{tikzpicture}
      [baseline=(root.base),scale=.22, minimum size=8,inner sep=1pt,
      node/.style={anchor=center,circle,draw=black!60,font=\scriptsize}] {
        \node (0_1) at (0,0) [node] {$3$};%
        \node (0_2) at (1.6,0) [node] {$2$};%
        \node (1_1) at (3.2,0) [node] {$2$};%
        \node (1_2) at (4.8,0) [node] {$1$};%

        \node (i_0) at (.8,3) [node] {}; %
        \path[->,>=latex] (i_0) edge (0_1) edge (0_2);%
        \node (i_1) at (4,3) [node] {}; %
        \path[->,>=latex] (i_1) edge (1_1) edge (1_2);%

        \node (i2_0) at (2.4,6) [node] {}; %
        \path[->,>=latex] (i2_0) edge (i_0) edge (i_1); %
        \node (i2_1) at (5.6,6) [node] {$2$}; %

        \node (root) at (4,9) [node] {}; %
        \path[->,>=latex] (root) edge (i2_0) edge (i2_1);%
      };
    \end{tikzpicture}\label{fig:leaf-pushed}}%



  \caption{Representations of an IP forwarding table: tabular form with
    address in binary format, prefix length and next-hop address label (a);
    prefix tree with state transitions marked (b); ORTC-compressed prefix
    tree (c); level-compressed multibit trie (d); and leaf-pushed trie (e).}
\label{fig:fibs}
\end{figure}
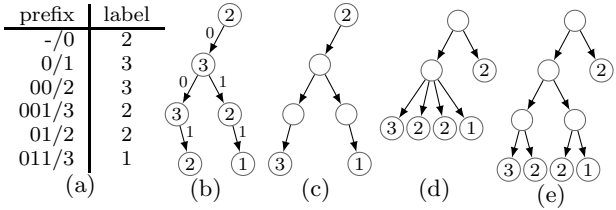

To actually forward a packet, we need to find the entry that matches the
destination address in the packet on the greatest number of bits, starting
from the MSB.  For the address $0111$, each of the entries $-/0$ (the default
route), $0/1$, $01/2$, and $011/3$ match.  As the most specific entry is the
last one, the lookup operation yields the next-hop label $1$.  This is then
used as an index into the neighbor table and the packet is forwarded on the
interface facing that neighbor.  This tabular representation is not really
efficient, as a lookup or update operation requires looping through each
entry, taking $O(N)$ time. The storage size is $(W + \lg \delta)N$
bits\footnote{The notation $\lg x$ is shorthand for $\lceil \log_2(x)
  \rceil$.}.

Binary prefix trees, or \emph{tries} \cite{Sklower91atree-based}, support
lookup and update much more efficiently (see Fig.~\ref{fig:pretix-trie}).  A
trie is a labeled ordinal tree, in which every path from the root node to a
leaf corresponds to an IP prefix and lookup is based on successive bits of
the destination address: if the next bit is $0$ proceed to the left sub-trie,
otherwise proceed to the right, and if the corresponding child is missing
return the last label encountered along the way.  Prefix trees generally
improve the time to perform a lookup or update from linear to $O(W)$ steps,
although memory size increases somewhat.

A prefix tree representation is usually not unique, which opens the door to a
wide range of optimization strategies to find more space-efficient forms.
For instance, the prefix tree in Fig.~\ref{fig:ortc} is \emph{forwarding
  equivalent} with the one in Fig.~\ref{fig:pretix-trie}, in that it orders
the same label to every complete $W$ bit long key, yet contains only $3$
labeled nodes instead of $7$ (see the ORTC algorithm in \cite{draves:99,
  Uzmi:2011:SPN:2079296.2079325}).  Alternatively, \emph{level-compression}
\cite{Srinivasan:1998:FIL:277858.277863, 772439, 752164} is a technique to
remove excess levels from a binary trie to obtain a forwarding equivalent
multibit trie that is substantially smaller (see Fig.~\ref{fig:multibit}).

A standard technique to obtain \emph{a unique, normalized form} of a prefix
tree is \emph{leaf-pushing} \cite{draves:99,
  Srinivasan:1998:FIL:277858.277863, DBLP:conf/icnp/SongKHL09}: in a first
preorder traversal labels are pushed from the parents towards the children,
and then in a second postorder traversal each parent with identically labeled
leaves is substituted with a leaf marked with the children's label (see
Fig.~\ref{fig:leaf-pushed}).  The resultant trie is called a
\emph{leaf-labeled} trie since interior nodes no longer maintain labels, and
it is also a \emph{proper} binary trie with nice structure: any node is
either a leaf node or it is an interior node with exactly two children.
Updates to a leaf-pushed trie, however, may be expensive; modifying the
default route, for instance, can result in practically all leaves being
relabeled, taking $O(N)$ steps in the worst-case.

\subsection{Information-theoretic Limit}
\label{sec:inform-theor-redund}

How can we know for sure that a particular prefix tree representation, from
the many, is indeed space-efficient?  To answer this question, we need
information-theoretically justified storage size bounds.

The first verifiable cornerstone of a space-efficient data structure is
whether its size meets the \emph{information-theoretic lower bound},
corresponding to the minimum number of bits needed to uniquely identify any
instance of the data.  For example, there are exactly $\delta^n$ strings of
length $n$ on an alphabet of size $\delta$, and to be able to distinguish
between any two we need at least $\lg(\delta^n) \approxeq n \lg \delta$ bits. In this
example even a 
naive string representation meets the bound, but in more complex cases
attaining it is much more difficult.

This argumentation generalizes from strings to leaf-labeled tries as follows
(see also Ferragina \emph{et al.} \cite{Ferragina:2009:CIL:1613676.1613680}).
\begin{proposition}\label{prop:inform-theor-bound}
  Let $T$ be a proper, binary, leaf-labeled trie with $n$ leaves on an
  alphabet of size $\delta$.  The information-theoretic lower bound to encode
  $T$ is $2n + n \lg \delta$ bits\footnoteE{{\it Erratum:} In the original
    manuscript \cite{sigcomm_2013} the information-theoretic lower bound is
    wrongly set to $4n + n \lg \delta$. See the note below for the
    explanation.}.
\end{proposition}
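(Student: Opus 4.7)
The plan is to interpret the ``information-theoretic lower bound'' in the standard way: the minimum number of bits needed to identify one instance among all admissible instances is $\lceil \lg M \rceil$, where $M$ is the total number of distinct objects in the class. So I would first enumerate the class of proper, binary, leaf-labeled tries with $n$ leaves on an alphabet of size $\delta$, and then take the base-$2$ logarithm.

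First I would factor any such trie $T$ into two independent pieces of data: (i) its shape, i.e., the underlying unlabeled proper binary ordinal tree with $n$ leaves, and (ii) the labeling, i.e., the assignment of one of $\delta$ alphabet symbols to each of the $n$ leaves (read, say, in left-to-right order of the shape). The two pieces are independent in the sense that every combination of a shape and a labeling yields a legitimate leaf-labeled trie, and distinct combinations yield distinct tries. Hence the total count is the product $S_n \cdot \delta^n$, where $S_n$ is the number of distinct shapes.

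The step I would focus on is counting $S_n$. Because the trie is \emph{proper} (every internal node has exactly two children) and \emph{ordinal} (left vs.\ right matters), $S_n$ equals the classical count of full binary trees with $n$ leaves, which is the Catalan number $C_{n-1} = \frac{1}{n}\binom{2n-2}{n-1}$. A standard application of Stirling's formula then gives
\[
\lg C_{n-1} \;=\; 2(n-1) - \tfrac{3}{2}\lg n - \tfrac{1}{2}\lg \pi + O(1) \;=\; 2n - O(\lg n).
\]
Combining with $\lg(\delta^n) = n \lg \delta$ bits for the labeling part, we obtain
\[
\lg\bigl(S_n \cdot \delta^n\bigr) \;=\; 2n + n \lg \delta - O(\lg n),
\]
which matches the claimed bound of $2n + n \lg \delta$ bits up to lower-order terms.

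The main obstacle is essentially bookkeeping rather than a genuine mathematical difficulty: one has to be careful that the factorization into shape $+$ leaf labels really enumerates each tree exactly once (no symmetries collapse distinct ordinal trees, because the left/right order is part of the data), and that ``proper binary trie with $n$ leaves'' indeed corresponds to a full binary tree so that the Catalan count applies. Once these two points are verified, the Stirling estimate is routine and the bound drops out directly.
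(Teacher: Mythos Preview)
Your proposal is correct and matches the paper's own justification essentially line for line: factor the trie into its unlabeled shape plus the leaf labeling, count shapes by the Catalan number $C_{n-1}$ so that $\lg C_{n-1} = 2n - \Theta(\log n)$, and add $n\lg\delta$ bits for the labels. The paper presents exactly this counting argument, with the same independence observation and the same asymptotic.
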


The bound is easily justified with a simple counting argument.  The number of
proper binary trees on $n$ leaves is the $(n-1)$-th Catalan number
$C_{n-1}=\tfrac1{n}\binom{2n-2}{n-1}$, therefore we need at least $\lg
C_{n-1} = 2n - \Theta(\log n)$ bits\footnoteE{{\it Erratum:} The
  information-theoretic lower bound and the entropy are off by a constant
  factor $2$ in \cite{sigcomm_2013}.  The reason is that the original version
  takes the number of trees on $n$ \emph{nodes} instead of $n$ \emph{leaves},
  thus it wrongly puts the number of bits to encode the tree to $\approx 4n$
  bits.} to encode the tree itself \cite{63533}; storing the label map
defined on the $n$ leaves of $T$ requires an additional $n \lg \delta$ bits;
and assuming that the two are independent we need $2n + n \lg \delta$ bits
overall.

A representation that encodes to within the constant factor of the
information-theoretic lower bound (up to lower order terms) and
simultaneously supports queries in optimal time is called a \emph{compact
  data structure}, while if the constant factor is $1$ then it is also a
\emph{succinct data structure} \cite{63533}.

\subsection{Entropy Bounds}
\label{sec:fib-entropy}

A succinct representation very often contains further redundancy in the form
of regularity in the label mapping.  For instance, in the sample trie of
Fig.~\ref{fig:leaf-pushed} there are three leaves with label $2$, but only
one with label $1$ or $3$.  Thus, we could save space by representing label
$2$ on fewer bits, similarly to how Huffman-coding does for strings.  This
correspondence leads to the following notion of \emph{entropy} for
leaf-labeled tries (on the traces of Ferragina \emph{et al.}
\cite{Ferragina:2009:CIL:1613676.1613680}).
\begin{proposition}\label{prop:entropy}
  Let $T$ be a proper, binary, leaf-labeled trie with $n$ leaves on an
  alphabet $\Sigma$, let $p_s$ denote the probability that some symbol $s \in
  \Sigma$ appears as a leaf label, and let $H_0$ denote the Shannon-entropy
  of the probability distribution $p_s, s \in \Sigma$: 
  \begin{equation}\label{eq:Shannon-entropy}
    H_0 = \sum_{s\in\Sigma} p_s \log_2\nicefrac1{p_s} \enspace  .
  \end{equation}
  Then, the zero-order entropy of $T$ is $2n + n H_0$ bits\footnoteE{{\it
      Erratum:} The claim is revised from \cite{sigcomm_2013}, where the
    entropy wrongly appears as $4n + n H_0$. See previous note for an
    explanation.}.
\end{proposition}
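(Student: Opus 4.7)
The plan is to decompose the encoding cost into two independent parts, mirroring the structure of Proposition~\ref{prop:inform-theor-bound}, but refining the per-leaf label cost from the worst-case $\lg \delta$ down to the Shannon bound $H_0$. Specifically, I would argue separately that (i)~the shape of $T$ costs $2n-\Theta(\log n)$ bits, and (ii)~the leaf-label sequence costs $nH_0+O(\log n)$ bits; then, assuming (as in Proposition~\ref{prop:inform-theor-bound}) that shape and labels are independent, summing yields the $2n + nH_0$ entropy.

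For the shape, I would reuse verbatim the Catalan-number counting argument already invoked for Proposition~\ref{prop:inform-theor-bound}: a proper binary trie with $n$ leaves is one of $C_{n-1}=\tfrac{1}{n}\binom{2n-2}{n-1}$ shapes, so $\lg C_{n-1} = 2n - \Theta(\log n)$ bits suffice to identify the shape and none less can. This part is immediate and the same number appears in $2n + nH_0$ because the low-order $\Theta(\log n)$ term is dropped, consistent with the convention used earlier.

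For the labels, I would fix a canonical traversal (say left-to-right / in-order DFS) and view the ordered sequence of leaf labels as a string $L \in \Sigma^n$ whose empirical symbol frequencies are $p_s n$. A counting argument parallel to the shape part then applies: the number of strings in $\Sigma^n$ with these exact frequencies is the multinomial coefficient $\binom{n}{p_1 n, p_2 n, \ldots, p_\delta n}$, and Stirling's approximation gives
\[
\lg\!\binom{n}{p_1 n,\ldots,p_\delta n} \;=\; n\sum_{s\in\Sigma} p_s \log_2\nicefrac{1}{p_s} + O(\log n) \;=\; nH_0 + O(\log n) \enspace .
\]
Equivalently, a zero-order arithmetic encoder attains $nH_0 + O(1)$ bits on $L$, matching the lower bound to within lower-order terms and thus certifying that $nH_0$ bits are necessary and sufficient to encode the labels under the zero-order model.

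Combining the two parts under the independence assumption used in Proposition~\ref{prop:inform-theor-bound}, the total is $2n + nH_0 + O(\log n)$, establishing the claim up to lower-order terms. The main delicacy I anticipate is the independence step: strictly speaking, shape and labels are joint, and one must also fix a canonical leaf ordering so that $H_0$ in equation~\eqref{eq:Shannon-entropy} is unambiguously defined on the tree. Once a canonical traversal is pinned down and the shape/label product space is adopted (as is standard in the Ferragina \emph{et al.} framework \cite{Ferragina:2009:CIL:1613676.1613680}), the Stirling bound closes the argument cleanly.
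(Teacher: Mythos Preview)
Your proposal is correct and follows the same decomposition the paper uses: $2n$ bits for the tree shape via the Catalan counting argument of Proposition~\ref{prop:inform-theor-bound}, plus $nH_0$ bits for the leaf-label string via zero-order entropy, summed under the independence assumption. The paper in fact offers only the one-sentence intuitive justification (``the entropy of the tree structure corresponds to the information-theoretic limit of $2n$ bits \ldots\ the leaf-labels add an extra $nH_0$ bits''), treating the proposition essentially as a definition inherited from Ferragina \emph{et al.}~\cite{Ferragina:2009:CIL:1613676.1613680}; your multinomial/Stirling step for the label part is a welcome but not strictly required sharpening of what the paper leaves implicit.
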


Intuitively speaking, the entropy of the tree structure corresponds to the
information-theoretic limit of $2n$ bits as we do not assume any regularity
in this regard.  To this, the leaf-labels add an extra $n H_0$ bits of
entropy.

The entropy of a trie depends mainly on the size of the underlying tree and
the distribution of labels on it.  This transforms to FIBs quite naturally:
the more prefixes go to the same next-hop and the more the FIB resembles ``a
default route with few exceptions'', the smaller the Shannon-entropy of the
next-hop distribution and the tighter the space limit. Accordingly, we shall
define the notions \emph{FIB information-theoretic lower bound} and \emph{FIB
  entropy} as those of the underlying leaf-pushed prefix tree.  Both space
bounds are well-defined as the normalized form is unique. Note, however, that
in contrast to general trees IP FIBs are of bounded height, so the real
bounds should be somewhat smaller. Additionally, for the purposes of this
paper our space bounds involve binary leaf-labeled tries only. We relax this
restriction in \cite{Retvari:2012:CIF:2390231.2390232} using the generic trie
entropy measure of~\cite{Ferragina:2009:CIL:1613676.1613680}.

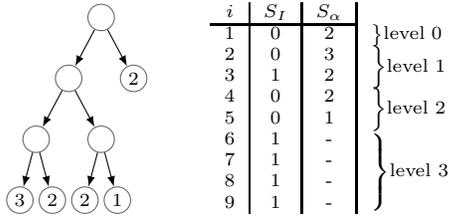
\begin{figure}
  \centering
  \subfigure{%
    \begin{tikzpicture}
      [baseline=(root.base),scale=.27, minimum size=10,inner sep=1pt,
      node/.style={anchor=center,circle,draw=black!60,font=\scriptsize}] {
        \node (0_1) at (0,0) [node] {$3$};%
        \node (0_2) at (1.6,0) [node] {$2$};%
        \node (1_1) at (3.2,0) [node] {$2$};%
        \node (1_2) at (4.8,0) [node] {$1$};%

        \node (i_0) at (.8,3) [node] {}; %
        \path[->,>=latex] (i_0) edge (0_1) edge (0_2);%
        \node (i_1) at (4,3) [node] {}; %
        \path[->,>=latex] (i_1) edge (1_1) edge (1_2);%

        \node (i2_0) at (2.4,6) [node] {}; %
        \path[->,>=latex] (i2_0) edge (i_0) edge (i_1); %
        \node (i2_1) at (5.6,6) [node] {$2$}; %

        \node (root) at (4,9) [node] {}; %
        \path[->,>=latex] (root) edge (i2_0) edge (i2_1);%
      };
    \end{tikzpicture}}%
  \hskip2em%
  \subfigure{%
    \begin{scriptsize}
      \renewcommand{\arraystretch}{1}
      \begin{tabular}[t]{c|c|c|cp{1cm}}
        $i$ & $S_I$ & $S_{\alpha}$\\
        \cline{1-3}
        1 &  0 & 2 & \rdelim\}{1}{3mm}[level 0]\\
        2 &  0 & 3 & \rdelim\}{2}{3mm}[level 1]\\
        3 &  1 & 2 &\\
        4 &  0 & 2 & \rdelim\}{2}{3mm}[level 2]\\
        5 &  0 & 1 &\\
        6 &  1 & - & \rdelim\}{4}{3mm}[level 3]\\
        7 &  1 & - &\\
        8 &  1 & - &\\
        9 &  1 & - &
      \end{tabular}
      \renewcommand{\arraystretch}{1}
    \end{scriptsize}}
  \caption{A leaf-pushed trie and its \XBWL transform.}
  \label{fig:xbwl-trans}
\end{figure}

\section{Attaining Entropy Bounds}
\label{sec:mbw}

Below, we present our first compressed FIB data structure, the
\emph{Burrows-Wheeler transform for binary leaf-labeled tries} (\XBWL).  This
data structure is a stripped down version of \emph{MBW}, the \emph{Multibit
  Burrows-Wheeler} transform from \cite{Retvari:2012:CIF:2390231.2390232},
and the \emph{XBW-l}\xspace transform from \cite{sigcomm_2013}, which in turn
build on the succinct level-indexed binary trees of Jacobson \cite{63533} 
and the XBW transform due to Ferragina \emph{et al.}
\cite{Ferragina:2009:CIL:1613676.1613680}.  In contrast to \XBWL that is
binary only, the original \emph{MBW} and \emph{XBW-l}\xspace transforms
support level-compressed tries as well, at the price of encoding to a
slightly larger representation and missing the information-theoretical
storage size bounds defined above.

The basis for the \XBWL transform is a normalized, binary\footnoteE{{\it
    Erratum:} Originally, \XBWL allowed to encode level-compressed (i.e., not
  necessarily binary) tries as well, which we do not consider here to be able
  to meet the tight information-theoretical bounds.}, proper, leaf-labeled
trie.  Let $T$ be a binary tree on $t$ nodes, let $L$ be the set of leaves
with $n = \abs{L}$, and let $l$ be a mapping $V \mapsto \Sigma$ specifying
for a node $v$ either that $v$ does not have a label associated with it
(i.e., $l(v) = \emptyset$) or the corresponding label $l(v) \in \Sigma$.  If
$T$ is proper, binary, and leaf-labeled, then the following invariants hold:
\begin{description}
\item[P1:] Either $v \in L$, or $v$ has $2$ children.
\item[P2:] $l(v) \neq \emptyset$ $\Leftrightarrow$ $v \in L$.
\item[P3:] $t < 2n$ and so $t=O(n)$.
\end{description}

The main idea in \XBWL is serializing $T$ into a bitstring $S_I$ that encodes
the tree structure and a string $S_{\alpha}$ on the alphabet $\Sigma$
encoding the labels, and then using a sophisticated lossless string
compressor to obtain the storage size bounds\footnoteE{{\it Erratum:} The
  original version also contained a third string $S_{\last}$ that was needed
  to correctly encode level-compressed input, as even with $S_{\last}$
  explicitly stored \XBWL met the (erroneously loose) entropy bound of
  $4n+H_0n$.  Herein, we shave off $S_{\last}$ in order to make up for the
  ``lost'' constant in the storage size bounds.}.  The trick is in making
just the right amount of context available to the string compressors, and
doing this all with guaranteeing optimal lookup on the compressed form.
Correspondingly, the \XBWL transform is defined as the tuple $\xbwl(T) =
(S_{I}, S_{\alpha})$, where
\begin{itemize}
\item $S_I$: a bitstring of size $t$ with zero in position $i$ if the $i$-th
  node of $T$ in level-order is an interior node and $1$ otherwise; and
\item $S_{\alpha}$: a string of size $n$ on the alphabet $\Sigma$ encoding
  the leaf labels.
\end{itemize}

For our sample FIB, the leaf-pushed trie and the corresponding \XBWL
transform are given in Fig.~\ref{fig:xbwl-trans}.

\subsection{Construction and IP lookup}
\label{sec:xbw-constr}

In order to generate the \XBWL transform, one needs to fill up the strings
$S_{I}$ and $S_{\alpha}$, starting from the root and traversing $T$ in a
breadth-first-search order.

\vspace{.3em}\hrule height 1pt\vspace{.2em}%
  \algrenewcommand\algorithmicdo{}
  \begin{small}
    \begin{algorithmic}[1]



      \State{$i \gets 1$; $j \gets 1$}

      \algrenewcommand{\algorithmicfunction}{\textsc{bfs-traverse}}

      \Function{}{node $v$, integer $i$, integer $j$}

      \State{\textbf{if} $v \notin L$ \textbf{then} $S_{I}[i] \gets 0$}

      \State{$\qquad$\textbf{else} $S_{I}[i] \gets 1$;
        $S_{\alpha}[j] \gets l(v)$; $j \gets j + 1$}

      \State{$i \gets i + 1$}


      \EndFunction
    \end{algorithmic}
  \end{small}
\hrule\vspace{.5em}%

The following statement is now obvious.
\begin{lemma}\label{lem:transform}
  Given a proper binary, leaf-labeled trie $T$ on $t$ nodes, $\xbwl(T)$ can
  be built in $O(t)$ time.
\end{lemma}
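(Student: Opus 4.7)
The plan is to verify that the pseudocode sketched above, suitably embedded in a standard breadth-first traversal, visits each of the $t$ nodes exactly once and performs only constant work per visit. First I would set up storage: allocate $S_I$ as a bitstring of length $t$ (the total node count, computable in $O(t)$ time by a preliminary pass, or obtained incrementally via amortized doubling without affecting the asymptotics) and $S_\alpha$ as an array of length $n$ over $\Sigma$, where by invariant P3 we have $n \le t$, so both allocations cost $O(t)$.

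Next I would run BFS from the root using a standard FIFO queue, maintaining the two write-cursors $i$ (into $S_I$) and $j$ (into $S_\alpha$). When a node $v$ is dequeued we do the following in $O(1)$: if $v \notin L$ set $S_I[i]\gets 0$ and enqueue its two children (which exist by invariant P1); if $v \in L$ set $S_I[i] \gets 1$ and, by invariant P2, write $S_\alpha[j]\gets l(v)$ and increment $j$. In either case, increment $i$. Because BFS on a rooted tree visits nodes in level-order from left to right, this matches the definition of $\xbwl(T)$: the $i$-th bit of $S_I$ corresponds to the $i$-th node in level-order, and leaf labels are appended to $S_\alpha$ in the same order.

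Finally I would tally the cost. Each node is enqueued and dequeued exactly once, so there are $2t$ queue operations, each $O(1)$. Per-node work outside the queue (comparisons, assignments, counter updates) is also $O(1)$. The total time is therefore $O(t)$, matching the claimed bound; correctness of the output follows directly from the fact that BFS realizes level-order on $T$ together with invariants P1--P2 ensuring that interior nodes have exactly two children and precisely the leaves carry labels. I do not foresee a substantive obstacle here; the only mildly non-routine point is ruling out an $O(t \log t)$ factor from growing $S_I$ dynamically, which is handled either by the preliminary counting pass or by amortized doubling.
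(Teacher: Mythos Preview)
Your proposal is correct and matches the paper's approach: the paper simply presents the BFS-traversal pseudocode and declares the lemma ``obvious,'' with no further argument. Your write-up just fills in the routine details (one visit per node, $O(1)$ work per visit) that the paper leaves implicit.
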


The transform $\xbwl(T)$ has some appealing properties.  For instance, the
children of some node, if exist, are stored on consecutive indices in $S_{I}$
and $S_{\alpha}$.  In fact, \emph{all} nodes at the same level of $T$ are
mapped to consecutive indices. 

The next step is to actually compress the strings.  This promises easier than
compressing $T$ directly as $\xbwl(T)$, being a sequential string
representation, lacks the intricate structure of tries.  An obvious choice
would be to apply some standard string compressor (like the venerable
\texttt{gzip(1)} tool), but this would not admit queries like ``get all
children of a node'' without first decompressing the transform.  Thus, we
rather use a compressed string self-index \cite{63533,
  Ferragina:2009:CIL:1613676.1613680, Raman:2002:SID:545381.545411,
  Ferragina:2007:CRS:1240233.1240243} to store $\xbwl(T)$, which allows us to
implement efficient navigation immediately on the compressed form.

The way string indexers usually realize navigability is to implement a
certain set of simple primitives in constant $O(1)$ time in-place. Given a
string $S[1,t]$ on alphabet $\Sigma$, a symbol $s \in \Sigma$, and integer $q
\in [1,t]$, these primitives are as follows:
\begin{itemize}
\item $\access(S,q)$: return the symbol at position $q$ in $S$;
\item $\rank_s(S,q)$: return the number of times symbol $s$ occurs in the
  prefix $S[1,q]$; and
\item $\select_s(S,q)$: return the position of the $q$-th occurrence of symbol
  $s$ in $S$.
\end{itemize}

Curiously, these simple primitives admit strikingly complex queries to be
implemented and supported in optimal time.  In particular, the IP lookup
routine on $\xbwl(T)$ takes the following form\footnoteE{{\it Erratum:}
  Pseudo-code updated.}.

\pagebreak
\hrule height 1pt\vspace{.2em}%
  \algrenewcommand{\algorithmicfunction}{$\lookup$}
  \begin{small}
    \begin{algorithmic}[1]
      \Function{}{address $a$}

      \State{$q \gets 0$, $i \gets 1$}

      \While{$q < W$}

      \State{\textbf{if} $\access(S_I,i) = 1$ \textbf{then}}

      \State{$\qquad$\textbf{return} $\access(S_{\alpha}, \rank_1(S_I, i))$}

      \State{$r \gets \rank_0(S_I, i)$}

      \State{$f \gets 2r$}

      \State{$j \gets \getbits(a, q, 1)$}

      \State{$i \gets f + j$; $q \gets q + 1$}

      \EndWhile

      \EndFunction
    \end{algorithmic}
  \end{small}
\hrule\vspace{.5em}%

The code first checks if the actual node, encoded at index $i$ in $\xbwl(T)$,
is a leaf node.  If it is, then $\rank_1(S_I, i)$ tells how many leaves were
found in the course of the BFS-traversal \emph{before} this one and then the
corresponding label is returned from $S_{\alpha}$.  If, on the other hand,
the actual node is an interior node, then $r$ tells how many interior nodes
precede this one and since, as one easily checks, in a level-ordered tree
traversal the children of the $r$-th interior node are encoded from position
$2r$~\cite{63533}, $f$ in fact gets the index of the first child of the
actual node.  Next, we obtain the index $j$ of the child to be visited next
from the address to be looked up, we set the current index to $f+j$ and then
we carry on with the recursion.

\subsection{Memory Size Bounds}
\label{sec:xbw-sucinct}

First, we show that \XBWL is a succinct FIB representation, in that it
supports lookup in optimal $O(W)$ time and encodes to information-theoretic
lower bound.
\begin{lemma}\label{lem:inf_theo_min}
  Given a proper, binary, leaf-labeled trie $T$ with $n$ leaves on an
  alphabet of size $\delta$, $\xbwl(T)$ can be stored on $2n + n \lg\delta$
  bits so that $\lookup$ on $\xbwl(T)$ terminates in $O(W)$
  time\footnoteE{{\it Erratum:} Claim updated according to the correct
    information-theoretical lower-bound.}.
\end{lemma}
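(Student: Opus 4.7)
The plan is to prove the storage bound and the lookup-time bound separately, leveraging the structural invariants P1--P3 of the input trie together with off-the-shelf succinct bitvector machinery.

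\textbf{Storage bound.} The transform $\xbwl(T)$ consists of only two strings. By invariant P3, $S_I$ has length $t < 2n$, so representing it as a raw bitvector takes strictly fewer than $2n$ bits, with the additional $o(n)$ bits of index structures needed to support $\access$ and $\rank$ in $O(1)$ time (via Jacobson's succinct bitvector rank/select index). The label string $S_\alpha$ has length exactly $n$ by invariant P2, and each symbol packs into $\lceil \lg \delta \rceil$ bits in a simple aligned array supporting $\access$ in constant time. Summing the two contributions yields the claimed $2n + n \lg \delta$ bits up to lower-order terms.

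\textbf{Correctness of \lookup.} I would establish the loop invariant that after processing the first $q$ most significant bits of the address $a$, the variable $i$ indexes, in $S_I$, the node of $T$ reached along the corresponding root-to-leaf search path at depth $q$. The invariant holds initially, since $q=0$ and $i=1$ corresponds to the root in the level-ordered encoding. For the inductive step, I would invoke the classical level-ordering identity for binary trees encoded in this form: if the node at position $i$ is the $r$-th interior node in level-order, then its two children occupy positions $2r$ and $2r+1$ of the sequence. Hence $f = 2\rank_0(S_I,i)$ correctly addresses the left child and $f+j$ with $j \in \{0,1\}$ extracted by $\getbits$ selects the next child along the search path. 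In the leaf case, $\rank_1(S_I, i)$ counts the leaves encountered in the BFS traversal up to and including $i$, which is exactly the position of the corresponding label inside $S_\alpha$ by construction of the transform.

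\textbf{Bounding the running time.} Each iteration of the while-loop performs a constant number of $\access$, $\rank$, $\getbits$, and arithmetic operations, each of which is $O(1)$ on the chosen succinct representations. The counter $q$ strictly increases from $0$ and the loop terminates either when a leaf is reached or when $q = W$, so the loop body executes at most $W$ times. Multiplying constant per-iteration cost by $W$ gives the stated $O(W)$ bound.

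\textbf{Main obstacle.} The only nontrivial ingredient is the level-ordering identity invoked in the inductive step, namely that the children of the $r$-th interior node of $T$ encoded in level-order in $S_I$ sit at positions $2r$ and $2r+1$. This is where the breadth-first construction of $\xbwl(T)$ pays off; the only care needed is to check that the one-based indexing of $S_I$ and the $\rank_0$/$\rank_1$ conventions match the arithmetic in the pseudocode. Everything else reduces to citing the standard succinct bitvector toolkit and Proposition~\ref{prop:inform-theor-bound}.
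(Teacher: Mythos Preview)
Your proposal is correct and follows essentially the same approach as the paper: encode $S_I$ in $t<2n$ bits with an $o(n)$-bit auxiliary index supporting $O(1)$ \access/\rank\ (the paper cites \texttt{RRR} rather than Jacobson, but either works since \lookup\ never calls \select), encode $S_\alpha$ trivially in $n\lg\delta$ bits, and conclude that each of the at most $W$ iterations of \lookup\ costs $O(1)$. Your write-up is more explicit than the paper's---in particular you spell out the loop invariant and the level-order child-indexing identity that the paper only discusses informally in the text preceding the lemma---but the underlying argument is identical.
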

\begin{proof}
  One can encode $S_I$ on at most $t \approx 2n$ bits using the \texttt{RRR}
  succinct bitstring index \cite{Raman:2002:SID:545381.545411}, which
  supports $\select$ and $\rank$ in $O(1)$.  In addition, even the trivial
  encoding of $S_{\alpha}$ needs only another $n \lg\delta$ bits and provides
  $\access$ in $O(1)$.  So every iteration of $\lookup$ takes constant time,
  which gives the result.
\end{proof}

Next, we show that \XBWL can take advantage of regularity in leaf labels (if
any) and encode \emph{below} the information-theoretic bound to zero-order
entropy.
\begin{lemma}\label{lem:H_0}
  Let $T$ be a proper, binary, leaf-labeled trie with $n$ leaves on an
  alphabet of size $O(\polylog n)$, and let $H_0$ denote the Shannon-entropy
  of the leaf-label distribution.  Then, $\xbwl(T)$ can be stored on $2n +
  nH_0 + o(n)$ bits so that $\lookup$ on $\xbwl(T)$ terminates in $O(W)$
  time\footnoteE{{\it Erratum:} Claim updated according to the correct
    entropy bound.}.
\end{lemma}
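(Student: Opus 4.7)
The plan is to follow the proof of Lemma~\ref{lem:inf_theo_min} nearly verbatim, keeping the RRR encoding of $S_I$ (which already contributes $2n + o(n)$ bits together with constant-time $\access$, $\rank$, and $\select$) and substituting only the flat representation of $S_\alpha$ with an entropy-compressed string self-index. Summing the contributions of the two strings will then give the target $2n + nH_0 + o(n)$ bits.

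For the label string, I would invoke any of the standard $H_0$-compressed sequence representations that, for a string of length $n$ on an alphabet of size $\sigma$, achieve $nH_0 + o(n\lg\sigma)$ bits with $O(1)$-time $\access$ --- for example, a wavelet-tree variant or an alphabet-partitioned scheme in the style of~\cite{Ferragina:2007:CRS:1240233.1240243}. Under the hypothesis $\sigma = O(\polylog n)$ we have $\lg\sigma = O(\log\log n)$; a plain wavelet tree with table-based bitvectors already gives a redundancy of $O(n\lg^2\sigma / \lg n)$, which for polylog alphabets is $o(n)$. Adding to the $2n+o(n)$ bits spent on $S_I$ yields the promised bound.

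Correctness and the $O(W)$ lookup bound are immediate. The pseudocode of $\lookup$ is unchanged; only the $\access(S_\alpha,\cdot)$ call on the terminating leaf and the $\rank/\access$ calls on $S_I$ touch the compressed representations, and each is still $O(1)$ under the indexes just chosen. Hence every iteration of the while-loop costs $O(1)$ and at most $W$ iterations are executed, so $\lookup$ terminates in $O(W)$ time.

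The only non-mechanical step, and the only place where the polylog assumption on $\sigma$ is actually used, is to verify that the lower-order term of the sequence index collapses to $o(n)$ rather than to $o(n\lg\sigma)$. Once the right off-the-shelf compressed-sequence result is cited and the polylog bound on $\sigma$ is plugged into its redundancy expression, the rest is just bookkeeping on top of the proof of Lemma~\ref{lem:inf_theo_min}.
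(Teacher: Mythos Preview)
Your proposal is correct and follows essentially the same route as the paper: keep the \texttt{RRR} index on $S_I$ for $2n+o(n)$ bits with $O(1)$ $\rank$/$\access$, and replace the naive encoding of $S_\alpha$ by the generalized wavelet tree of~\cite{Ferragina:2007:CRS:1240233.1240243}, which under the $O(\polylog n)$ alphabet assumption yields $nH_0+o(n)$ bits and $O(1)$ $\access$. The paper's own proof is the same two-sentence argument; you simply spell out more explicitly where the polylog hypothesis enters to collapse the redundancy term to $o(n)$.
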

\begin{proof}
  $S_I$ can be encoded as above, and $S_{\alpha}$ can be stored on $nH_0 +
  o(n)$ bits using generalized wavelet trees so that $\access$ is $O(1)$,
  under the assumption that the alphabet size is $O(\polylog n)$
  \cite{Ferragina:2007:CRS:1240233.1240243}.
\end{proof}

Interestingly, the above \emph{zero-order} entropy bounds can be easily
upgraded to \emph{higher-order} entropy.  A fundamental premise in data
compression is that elements in a data set often depend on their neighbors,
and the larger the context the better the prediction of the element from its
context and the more efficient the compressor.  Higher-order string
compressors can use the Burrows-Wheeler transform to exploit this contextual
dependency, a reversible permutation that places symbols with similar context
close to each other. This argumentation readily generalizes to leaf-labeled
tries; simply, the \emph{context of a node corresponds to its level in the
  tree} and because \XBWL organizes nodes at the same level (i.e., of similar
context) next to each other, it realizes the same effect for tries as the
Burrows-Wheeler transform for strings (hence the name).  Deciding whether or
not such contextual dependency is present in real IP FIBs is well beyond the
scope of this paper.  Here, we only note that if it is, then \XBWL can take
advantage of this and compress an IP FIB to higher-order entropy using the
techniques in \cite{Ferragina:2009:CIL:1613676.1613680,
  Retvari:2012:CIF:2390231.2390232}.

\begin{figure*}
  \centering
  \subfigure[][\mbox{\hspace{30pt}}]{%
    \begin{tikzpicture}%
      [baseline=(root.base),scale=.22, minimum size=9,inner sep=1pt,
      node/.style={anchor=center,circle,draw=black!60,font=\scriptsize}] {
        \useasboundingbox (0,.8) rectangle (14,12);
        \node (_0_0_0) at (9,0) [node] {$1$};%

        \node (0_0_1) at (2,3) [node] {$2$};%

        \node (0_2_0) at (6,3) [node] {$1$};%
        \node (0_3_0) at (10,3) [node] {};%
        \path[->,>=latex] (0_3_0)%
			edge%
                            node[above left,minimum size=1ex,scale=.7]{$0$}%
					(_0_0_0)%
			;%

		\node (0_3_1) at (12.4,3) [node] {$3$};%

        \node (1_0_0) at (1,6) [node] {}; %
        \path[->,>=latex] (1_0_0)%
				edge %
					node[above right,minimum size=1ex,scale=.7]{$1$}%
						(0_0_1);%
		\node (1_0_1) at (4,6) [node] {$3$}; %

        \node (1_1_0) at (7,6) [node] {}; %
        \path[->,>=latex] (1_1_0)%
			edge%
				node[above left,minimum size=1ex,scale=.7]{$0$}%
					(0_2_0)%
			;%

        \node (1_1_1) at (11,6) [node] {}; %
        \path[->,>=latex] (1_1_1)%
			edge%
                                node[above left,minimum size=1ex,scale=.7]{$0$}%
					(0_3_0)%
			edge%
				node[above right,minimum size=1ex,scale=.7]{$1$}%
					(0_3_1);%

        \node (2_0_0) at (2.5,9) [node] {}; %
        \path[->,>=latex] (2_0_0)%
			edge%
				node[above left,minimum size=1ex,scale=.7]{$0$}%
					(1_0_0)%
			edge%
				node[above right,minimum size=1ex,scale=.7]{$1$}%
					(1_0_1);%

        \node (2_0_1) at (9,9) [node] {$2$}; %
        \path[->,>=latex] (2_0_1)%
			edge%
				node[above left,minimum size=1ex,scale=.7]{$0$}%
					(1_1_0)%
			edge%
				node[above right,minimum size=1ex,scale=.7]{$1$}%
					(1_1_1);%

        \node (root) at (5.75,12) [node] {$1$}; %
        \path[->,>=latex] (root)%
			edge%
				node[above left,minimum size=1ex,scale=.7]{$0$}%
					(2_0_0)%
			edge%
				node[above right,minimum size=1ex,scale=.7]{$1$}%
					(2_0_1);%
      };
    \end{tikzpicture}\label{fig:tf-pretix-trie}}%
  \hskip.5em%
  \subfigure[][\mbox{\hspace{30pt}}]{%
    \begin{tikzpicture}%
      [baseline=(root.base),scale=.22, minimum size=9,inner sep=1pt,
      node/.style={anchor=center,circle,draw=black!60,font=\scriptsize}] {
        \useasboundingbox (0,.8) rectangle (14,12);
        \node (_0_0_0) at (9,0) [node] {$1$};%
        \node (_0_0_1) at (11,0) [node] {$2$};%

        \node (0_0_0) at (0,3) [node] {$1$};%
        \node (0_0_1) at (2,3) [node] {$2$};%

        \node (0_2_0) at (6,3) [node] {$1$};%
        \node (0_2_1) at (8,3) [node] {$2$};%
        \node (0_3_0) at (10,3) [node] {};%
        \path[->,>=latex] (0_3_0)%
			edge%
					(_0_0_0)%
			edge%
					(_0_0_1);%

		\node (0_3_1) at (12.4,3) [node] {$3$};%

        \node (1_0_0) at (1,6) [node] {}; %
        \path[->,>=latex] (1_0_0)%
				edge %
				 		(0_0_0)
				edge %
						(0_0_1);%
		\node (1_0_1) at (4,6) [node] {$3$}; %

        \node (1_1_0) at (7,6) [node] {}; %
        \path[->,>=latex] (1_1_0)%
			edge%
					(0_2_0)%
			edge%
					(0_2_1);%

        \node (1_1_1) at (11,6) [node] {}; %
        \path[->,>=latex] (1_1_1)%
			edge%
					(0_3_0)%
			edge%
					(0_3_1);%

        \node (2_0_0) at (2.5,9) [node] {}; %
        \path[->,>=latex] (2_0_0)%
			edge%
					(1_0_0)%
			edge%
					(1_0_1);%

        \node (2_0_1) at (9,9) [node] {}; %
        \path[->,>=latex] (2_0_1)%
			edge%
					(1_1_0)%
			edge%
					(1_1_1);%

        \node (root) at (5.75,12) [node] {\textcolor{white}{$a$}}; %
        \path[->,>=latex] (root)%
			edge%
					(2_0_0)%
			edge%
					(2_0_1);%
	\draw[dash pattern=on 0.30mm off 0.50mm](3,12) -- (8,12);%
      };
    \end{tikzpicture}\label{fig:tf-lp-0}}%
 \hskip.5em%
  \subfigure[]{%
    \begin{tikzpicture}%
      [baseline=(root.base),scale=.22, minimum size=9,inner sep=1pt,
      node/.style={anchor=center,circle,draw=black!60,font=\scriptsize}] {
        \useasboundingbox (0,.8) rectangle (10,12);
        \node[draw=none] (_0_0_0) at (9,0) {}; 
        \node (0_0_0) at (0,3) [node] {$1$};%
        \node (0_0_1) at (2,3) [node] {$2$};%



        \node (1_0_0) at (1,6) [node] {}; %
        \path[->,>=latex] (1_0_0)%
				edge %
				 		(0_0_0)
				edge %
						(0_0_1);%
		\node (1_0_1) at (4,6) [node] {$3$}; %



        \node (2_0_0) at (2.5,9) [node] {}; %
        \path[->,>=latex] (2_0_0)%
			edge%
					(1_0_0)%
			edge%
					(1_0_1);%

        \node (2_0_1) at (9,9) [node] {}; %
        \path[->,>=latex] (2_0_1)%
			edge%
				node[above left,minimum size=1ex,scale=.7]{$0$}%
					(1_0_0)%
			edge%
				node[above,minimum size=1ex,scale=.7]{$1$}%
					(2_0_0);%

        \node (root) at (5.75,12) [node] {\textcolor{white}{$a$}}; %
        \path[->,>=latex] (root)%
			edge%
					(2_0_0)%
			edge%
					(2_0_1);%
	\draw[dash pattern=on 0.30mm off 0.50mm](3,12) -- (8,12);%
      };
    \end{tikzpicture}\label{fig:tf-dag-0}}%
  \subfigure[][\mbox{\hspace{30pt}}]{%
    \begin{tikzpicture}%
      [baseline=(root.base),scale=.22, minimum size=9,inner sep=1pt,
      node/.style={anchor=center,circle,draw=black!60,font=\scriptsize}] {
        \useasboundingbox (0,.8) rectangle (14,12);
        \node (_0_0_0) at (9,0) [node] {$1$};%
        \node (_0_0_1) at (11,0) [node] {$2$};%

        \node (0_0_0) at (0,3) [node] {$\perp$};%
        \node (0_0_1) at (2,3) [node] {$2$};%

        \node (0_2_0) at (6,3) [node] {$1$};%
        \node (0_2_1) at (8,3) [node] {$2$};%
        \node (0_3_0) at (10,3) [node] {};%
        \path[->,>=latex] (0_3_0)%
			edge%
					(_0_0_0)%
			edge%
					(_0_0_1);%

		\node (0_3_1) at (12.4,3) [node] {$3$};%

        \node (1_0_0) at (1,6) [node] {}; %
        \path[->,>=latex] (1_0_0)%
				edge %
				 		(0_0_0)
				edge %
						(0_0_1);%
		\node (1_0_1) at (4,6) [node] {$3$}; %

        \node (1_1_0) at (7,6) [node] {}; %
        \path[->,>=latex] (1_1_0)%
			edge%
					(0_2_0)%
			edge%
					(0_2_1);%

        \node (1_1_1) at (11,6) [node] {}; %
        \path[->,>=latex] (1_1_1)%
			edge%
					(0_3_0)%
			edge%
					(0_3_1);%

        \node (2_0_0) at (2.5,9) [node] {}; %
        \path[->,>=latex] (2_0_0)%
			edge%
					(1_0_0)%
			edge%
					(1_0_1);%

        \node (2_0_1) at (9,9) [node] {}; %
        \path[->,>=latex] (2_0_1)%
			edge%
					(1_1_0)%
			edge%
					(1_1_1);%

        \node (root) at (5.75,12) [node] {$1$}; %
        \path[->,>=latex] (root)%
			edge%
					(2_0_0)%
			edge%
					(2_0_1);%
        \draw[dash pattern=on 0.30mm off 0.50mm](0.5,9) --%
                  (11.5,9);
      };
    \end{tikzpicture}\label{fig:tf-lp-1}}%
  \hskip.5em%
  \subfigure[]{%
    \begin{tikzpicture}%
      [baseline=(root.base),scale=.22, minimum size=9,inner sep=1pt,
      node/.style={anchor=center,circle,draw=black!60,font=\scriptsize}] {
        \useasboundingbox (0,.8) rectangle (12,12);
        \node[draw=none] (_0_0_0) at (9,0) {}; 
        \node (0_0_0) at (0,3) [node] {};

        \node (0_2_0) at (6,3) [node] {$1$};%
        \node (0_2_1) at (8,3) [node] {$2$};%


        \node (1_0_0) at (1,6) [node] {}; %
        \path[->,>=latex] (1_0_0)%
				edge %
					node[above left,minimum size=1ex,scale=.7]{$0$}%
				 		(0_0_0)
				edge[out=-30,in=135] %
					node[below,minimum size=1ex,scale=.7]{$1$}%
						(0_2_1);%
		\node (1_0_1) at (4,6) [node] {$3$}; %

        \node (1_1_0) at (7,6) [node] {}; %
        \path[->,>=latex] (1_1_0)%
			edge%
					(0_2_0)%
			edge%
					(0_2_1);%

        \node (1_1_1) at (11,6) [node] {}; %
        \path[->,>=latex] (1_1_1)%
			edge%
				node[below right,minimum size=1ex,scale=.7]{$0$}%
					(1_1_0)%
			edge[out=145,in=35]%
				node[above left,minimum size=1ex,scale=.7]{$1$}%
					(1_0_1);%

        \node (2_0_0) at (2.5,9) [node] {}; %
        \path[->,>=latex] (2_0_0)%
			edge%
					(1_0_0)%
			edge%
					(1_0_1);%

        \node (2_0_1) at (9,9) [node] {}; %
        \path[->,>=latex] (2_0_1)%
			edge%
					(1_1_0)%
			edge%
					(1_1_1);%

        \node (root) at (5.75,12) [node] {$1$}; %
        \path[->,>=latex] (root)%
			edge%
					(2_0_0)%
			edge%
					(2_0_1);%
		\draw[dash pattern=on 0.30mm off 0.50mm](0.5,9) --%
                        (11.5,9);
      };
    \end{tikzpicture}\label{fig:tf-dag-1}}%
  \hskip.5em%
  \subfigure[]{%
    \begin{tikzpicture}%
      [baseline=(root.base),scale=.22, minimum size=9,inner sep=1pt,
      node/.style={anchor=center,circle,draw=black!60,font=\scriptsize}] {
        \useasboundingbox (0,.8) rectangle (12,12);
        \node[draw=none] (_0_0_0) at (9,0) {}; 
        \node (0_0_1) at (2,3) [node] {$2$};%

        \node (0_2_0) at (6,3) [node] {$1$};%
        \node (0_2_1) at (8,3) [node] {};


        \node (1_0_0) at (1,6) [node] {}; %
        \node (1_1_0) at (7,6) [node] {}; %
        \path[->,>=latex] (1_1_0)%
			edge%
					(0_2_0)%
			edge%
					(0_2_1);%

        \node (1_1_1) at (11,6) [node] {}; %
        \node (2_0_0) at (2.5,9) [node] {}; %
        \path[->,>=latex] (2_0_0)%
			edge%
					(1_0_0)%
			edge%
					(1_0_1);%

        \node (2_0_1) at (9,9) [node] {$2$}; %
        \path[->,>=latex] (2_0_1)%
			edge%
					(1_1_0)%
			edge%
					(1_1_1);%

        \node (root) at (5.75,12) [node] {$1$}; %
        \path[->,>=latex] (root)%
			edge%
					(2_0_0)%
			edge%
					(2_0_1);%

	\draw[dash pattern=on 0.30mm off 0.50mm] (0,6) --%
		(12.5,6);
        \path[->,>=latex] (1_0_0)%
				edge[out=-35,in=135] %
					node[below left,minimum size=1ex,scale=.7]{$0$}%
				 		(0_2_1)
				edge %
					node[above right,minimum size=1ex,scale=.7]{$1$}%
						(0_0_1);%
		\node (1_0_1) at (4,6) [node] {$3$}; %

        \path[->,>=latex] (1_1_1)%
			edge%
				node[below,minimum size=1ex,scale=.7]{$0$}%
					(1_1_0)%
			edge[out=145,in=35]%
				node[above left,minimum size=1ex,scale=.7]{$1$}%
					(1_0_1);%

      };
    \end{tikzpicture}\label{fig:tf-dag-2}}%
  \caption{A binary trie for a sample FIB (a); the same trie when
    leaf-pushing is applied from level $\lambda=0$ (b); the prefix DAG for
    leaf-push barrier $\lambda=0$ (c); the trie (d) and the prefix DAG (e)
    for $\lambda=1$; and the prefix DAG for $\lambda=2$ (f).  Dashed lines
    indicate the leaf-push barrier $\lambda$ and the invalid label $\perp$
    was removed from the corresponding leaf nodes in the prefix DAGs.}
\label{fig:trie-fold}
\end{figure*}
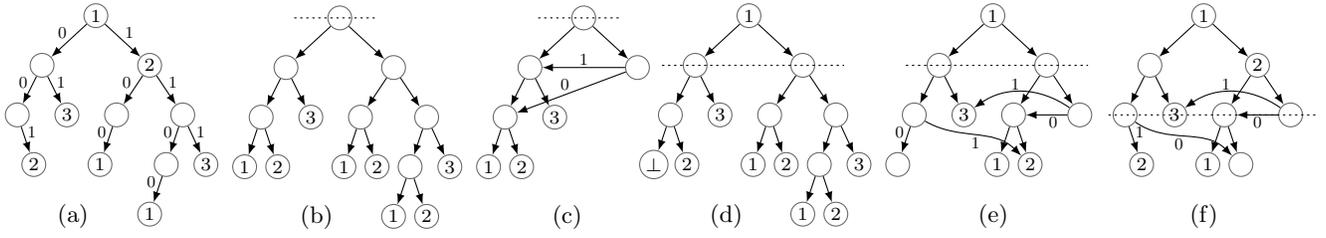

In summary, the \XBWL transform can be built fast, supports lookup in
asymptotically optimal time, and compresses to within entropy
bounds. Updates, however, may be expensive. Even the underlying leaf-pushed
trie takes $O(n)$ steps in the worst-case to update, after which we could
either rebuild the string indexes from scratch (again in $O(n)$) or use a
dynamic compressed index that supports updates to the compressed form
efficiently.  For instance, \cite{Makinen:2008:DES:1367064.1367072}
implements insertion and deletion in roughly $O(\log n)$ time, at the cost of
slower $\rank$ and $\select$.  The other shortcoming of \XBWL is that, even
if it supports lookups in \emph{theoretically} optimal time, it is just too
slow for line speed IP lookup (see Section~\ref{sec:num-lookup}).  In the
next section, therefore, we discuss a somewhat more practical FIB compression
scheme.

\section{Practical FIB Compression}
\label{sec:trie-fold}

The string indexes that underly \XBWL are \emph{pointerless}, encoding all
information in compact bitmaps.  This helps squeezing \XBWL into higher-order
entropy bounds but also causes that we need to perform multiple $\rank$ and
$\select$ operations just to, say, enter a child of a node.  And even though
these primitives run in $O(1)$ the constants still add up, building up delays
too huge for line speed IP lookup.  In contrast, a traditional \emph{pointer
  machine}, like a prefix tree, can follow a child pointer in just a single
indirection with only one random memory access overhead.  The next compressed
FIB data structure we introduce is, therefore, pointer-based.

The idea is to essentially re-invent the classic prefix tree, borrowing the
basic mechanisms from the Lempel-Ziv (\texttt{LZ78}) string compression
scheme \cite{Cover:1991:EIT:129837}.  \texttt{LZ78} attains entropy by
parsing the text into unique sub-strings, yielding a form that contains no
repetitions.  Tree threading is a generalization of this technique to
\emph{unlabeled trees}, converting the tree into a Directed Acyclic Graph
(DAG) by merging isomorphic sub-trees \cite{KATAJAINEN:1990p202, 4694879,
  Ioannidis:2005:LCD:1114718.1648670, DBLP:conf/icnp/SongKHL09}. In this
paper, we apply this idea to \emph{labeled trees}, merging sub-tries taking
into account both the underlying tree structure \emph{and} the labels
\cite{Bryant:1992:SBM:136035.136043, Cocke:1970:GCS:390013.808480}.  If the
trie is highly regular then this will eliminate all recurrent sub-structures,
producing a representation that contains no repetitions and hence, so the
argument goes, admits entropy bounds like \texttt{LZ78}. 

The below equivalence relation serves as the basis of our trie-merging
technique.
\begin{definition}\label{def:trie-equiv}
  Two leaf-labeled tries are identical if each of their sub-tries are
  pairwise identical, and two leaves are identical if they hold the same
  label.
\end{definition}

We call the algorithmic manifestation of this recursion the
\emph{trie-folding algorithm} and the resultant representation a \emph{prefix
  DAG}.  Fig.~\ref{fig:tf-pretix-trie} depicts a sample prefix tree,
Fig.~\ref{fig:tf-lp-0} shows the corresponding leaf-pushed trie, and
Fig.~\ref{fig:tf-dag-0} gives the prefix DAG.  For instance, the sub-tries
that belong to the prefix $0/1$ and $11/2$ are equivalent in the leaf-pushed
trie, and thusly merged into a single sub-trie that is now available in the
prefix DAG along two paths from the root.  This way, the prefix DAG is
significantly smaller than the original prefix tree as it contains only half
the nodes.

For the trie-folding algorithm it is essential that the underlying trie be
normalized; for instance, in our example it is easy to determine from the
leaf-pushed trie that the two sub-tries under the prefixes $00/2$ and $10/2$
are identical, but this is impossible to know from the original prefix tree.
Thus, leaf-pushing is essential to realize good compression but, at the same
time, makes updates prohibitive \cite{DBLP:conf/icnp/SongKHL09}.

To avoid this problem, we apply a simple optimization.  We separate the trie
into two parts; ``above'' a certain level $\lambda$, called the
\emph{leaf-push barrier}, where sub-tries are huge and so common sub-tries
are rare, we store the FIB as a standard binary prefix tree in which update
is fast; and ``below'' $\lambda$, where common sub-tries most probably turn
up, we apply leaf-pushing to obtain good compression. Then, by a cautious
setting of the leaf-push barrier we simultaneously realize fast updates and
entropy-bounded storage size.

The prefix DAGs for $\lambda=1$ and $\lambda=2$ are depicted in
Fig.~\ref{fig:tf-dag-1} and \ref{fig:tf-dag-2}.  The size is somewhat larger,
but updating, say, the default route now only needs setting the root label
without having to cycle through each leaf.

\subsection{Construction and IP lookup}
\label{sec:trie-folding:constuction}

We are given a binary trie $T$ (not necessarily proper and leaf-pushed) of
depth $W$, labeled from an alphabet $\Sigma$ of size $\delta$.  Let $V_T$
($\abs{V_T}=t$) be the set of nodes and $L_T$ ($\abs{L_T}=n$) be the set of
leaves.  Without loss of generality, we assume that $T$ does not contain
explicit blackhole routes. Then, the trie-folding algorithm transforms $T$
into a prefix DAG $D(T)$, on nodes $V_D$ and leaves $L_D$, with respect to
the leaf-push barrier $\lambda \in [0,W]$.

The algorithm is a simple variant of trie threading
\cite{KATAJAINEN:1990p202}: assign a unique id to each sub-trie that occurs
below level $\lambda$ and merge two tries if their ids are equal (as of
Definition~\ref{def:trie-equiv}). The algorithm actually works on a copy of
$T$ and always keeps an intact instance of $T$ available.  This instance,
called the control FIB, can exist in the DRAM of the line card's control CPU,
as it is only consulted to manage the FIB.  The prefix DAG itself is
constructed in fast memory.  We also need two ancillary data structures, the
leaf table and the sub-trie index, which can also live in DRAM.

The \emph{leaf table} will be used to coalesce leaves with identical labels
into a common leaf node.  Accordingly, for each $s \in \Sigma$ the leaf table
$\lp(s)$ stores a leaf node 
(no matter which one) with that label.  Furthermore, the \emph{sub-trie
  index} $\stind$ will be used to identify and share common sub-tries.
$\stind$ is in fact a reference counted associative array, addressed with
pairs of ids $(i, j) \in \mathbb{N} \times \mathbb{N}$ as keys and storing
for each key a node whose children are exactly the sub-tries identified by
$i$ and $j$. $\stind$ supports the following primitives:
\begin{itemize}
\item $\Put(i, j, v)$: if a node with key $(i,j)$ exists in $\stind$ then
  increase its reference count and return it, otherwise generate a new id in
  $v.\id$, store $v$ at key $(i,j)$ with reference count $1$, and return $v$;
  and
\item $\Get(i, j)$: dereference the entry with key $(i,j)$ and delete it if
  the reference count drops to zero.
\end{itemize}

In our code we used a hash to implement $\stind$, which supports the above
primitives in amortized $O(1)$ time.  

Now, suppose we are given a node $v$ to be subjected to trie-folding and a
leaf-push barrier $\lambda$.  First, for each descendant $u$ of $v$ at depth
$\lambda$ we normalize the sub-trie rooted at $u$ using label $l(u)$ as a
``default route'', and then we call the compress routine to actually merge
identical leaves and sub-tries below $u$, starting from the bottom and
working upwards until we reach $u$. Consider the below pseudo-code for the
main $\triefold$ routine.

\vspace{.3em}\hrule height 1pt\vspace{.2em}%
  \algrenewcommand\algorithmicdo{}%
  \begin{small}
    \begin{algorithmic}[1]
      \algrenewcommand{\algorithmicfunction}{$\triefold$}

      \Function{}{node $v$, integer $\lambda$}

      \For{\textbf{each} $\lambda$-level child $u$ of $v$ \textbf{do}}

      \State{\textbf{if} $l(u) = \emptyset$} 

      \State{$\qquad$\textbf{then} $\leafpush(u, \perp)$ \textbf{else}
        $\leafpush(u, l(u))$}

      \State{\textsc{postorder-traverse-at-$u$}($\compress$)}

      \EndFor

      \State{$l(\lp(\perp)) \gets \emptyset$}

      \EndFunction

      \algrenewcommand{\algorithmicfunction}{$\compress$}

      \Function{}{node $w$}

      \State{\textbf{if} $w \in L_D$ \textbf{then} $w.\id=l(w)$; $u \gets \lp(w)$}

      \State{$\qquad$\textbf{else} $u = \Put(w.\Left.\id, w.\Right.\id, w)$}

      \State{\textbf{if} $u \neq w$ \textbf{then} re-pointer the parent of $w$ to $u$; $\Delete(w)$}

      \EndFunction
    \end{algorithmic}
  \end{small}
\hrule\vspace{.5em}%

Here, $w.\Left$ is the left child and $w.\Right$ is the right child for $w$,
and $w.\id$ is the id of $w$.  As $\triefold$ visits each node at most twice
and $\compress$ runs in $O(1)$ if $\Put$ is $O(1)$, we arrive to the
following conclusion.
\begin{lemma}
  Given a binary trie $T$ on $t$ nodes, $D(T)$ can be constructed in $O(t)$
  time.
\end{lemma}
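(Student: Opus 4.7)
The plan is to account separately for the cost of the two passes that $\triefold$ performs over $T$, namely the leaf-pushing pass (lines 2--4) and the post-order $\compress$ pass (line 5), and show each contributes $O(t)$ work overall.

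First I would handle the leaf-pushing stage. The routine iterates over the $\lambda$-level descendants $u$ of $v$ and invokes $\leafpush$ on each. The sub-tries rooted at distinct $\lambda$-level nodes are node-disjoint, and their union is a subset of $V_T$. Since each call to $\leafpush(u,\cdot)$ is a single top-down traversal of the sub-trie at $u$ doing constant work per node (propagate the inherited label, substitute $\emptyset$ labels, etc.), the total cost of all $\leafpush$ invocations is bounded by the total number of nodes below level $\lambda$, which is at most $t$.

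Next I would bound the cost of the post-order traversals that run $\compress$. Again these traversals are node-disjoint and together visit only nodes at depth $\geq \lambda$, so the total number of node visits is at most $t$. It remains to show that each $\compress$ call runs in amortized $O(1)$ time. For a leaf $w$, the body consists of assigning $w.\id$ and one lookup in $\lp$, both $O(1)$. For an interior node $w$, the body is one call to $\Put$ and possibly a pointer re-routing plus a $\Delete$; since $\stind$ is realized as a hash table, $\Put$ and $\Delete$ cost amortized $O(1)$, and the re-pointer is a constant-time field update using the already-known parent link. Hence each invocation of $\compress$ is amortized $O(1)$.

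Combining the two passes, the total running time is $O(t)$ amortized, as stated. The only subtle point is to make sure that the work blamed on each node is counted only a constant number of times across different $\lambda$-level sub-tries; this follows because the $\lambda$-level descendants of $v$ partition the nodes below level $\lambda$, so no node participates in two distinct $\leafpush$ or post-order calls. I do not expect a real obstacle here: the argument is essentially bookkeeping on top of the stated $O(1)$ amortized bound for $\Put$, which the text has already justified by the use of hashing.
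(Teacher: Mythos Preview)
Your argument is correct and follows the same approach as the paper, which gives only a one-line justification (``$\triefold$ visits each node at most twice and $\compress$ runs in $O(1)$ if $\Put$ is $O(1)$''); you simply spell out the node-disjointness of the $\lambda$-level sub-tries and the amortized $O(1)$ cost of each $\compress$ call more carefully. One minor point you may want to tighten: $\leafpush$ can add missing children to make the sub-trie proper, so the number of nodes visited below level $\lambda$ is $O(t)$ rather than literally at most $t$, but this does not affect the asymptotic bound.
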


Lookup on a prefix DAG goes exactly the same way as on a conventional prefix
tree: follow the path traced out by the successive bits of the lookup key and
return the last label
found. 
We need to take care of a subtlety in handling invalid labels, though.
Namely, in our sample FIB of Fig.~\ref{fig:tf-pretix-trie}, the address $000$
is associated with label $1$ (the default route), which in the prefix DAG for
$\lambda=1$ (Fig.~\ref{fig:tf-dag-1}, derived from the trie on
Fig.~\ref{fig:tf-lp-1}) would become $\perp$ if we were to let leaf nodes'
$\perp$ labels override labels inherited from levels above $\lambda$.  This
problem is easy to overcome, though, by removing the label from the leaf
$\lp(\perp)$.  By our assumption the FIB contains no explicit blackhole
routes, thus every traversal yielding the empty label on $T$ terminates in
$\lp(\perp)$ on $D(T)$ and, by the above modification, also gives an empty
label.

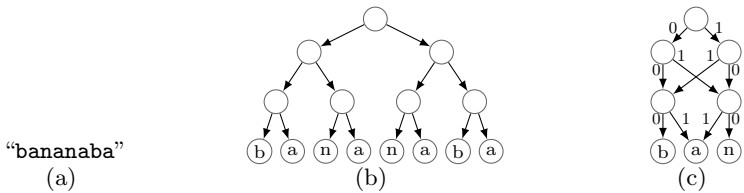
\begin{figure*}
  \centering
  \subfigure[]{%
    ``\texttt{bananaba}''\label{fig:tf-string}}%
  \hskip5em%
  \subfigure[]{%
    \begin{tikzpicture}%
      [baseline=(0_0_0.base),scale=.22, minimum size=9,inner sep=1pt,
      node/.style={anchor=center,circle,draw=black!60,font=\scriptsize}] {
        \node (0_0_0) at (0,0) [node] {b};%
        \node (0_0_1) at (2,0) [node] {a};%
        \node (0_1_0) at (4,0) [node] {n};%
        \node (0_1_1) at (6,0) [node] {a};%
        \node (0_2_0) at (8,0) [node] {n};%
        \node (0_2_1) at (10,0) [node] {a};%
        \node (0_3_0) at (12,0) [node] {b};%
		\node (0_3_1) at (14,0) [node] {a};%

        \node (1_0_0) at (1,3) [node] {}; %
        \path[->,>=latex] (1_0_0)%
				edge %
				 		(0_0_0)
				edge %
						(0_0_1);%
		\node (1_0_1) at (5,3) [node] {}; %
        \path[->,>=latex] (1_0_1)%
				edge %
				 		(0_1_0)
				edge %
						(0_1_1);%

        \node (1_1_0) at (9,3) [node] {}; %
        \path[->,>=latex] (1_1_0)%
			edge%
					(0_2_0)%
			edge%
					(0_2_1);%

        \node (1_1_1) at (13,3) [node] {}; %
        \path[->,>=latex] (1_1_1)%
			edge%
					(0_3_0)%
			edge%
					(0_3_1);%

        \node (2_0_0) at (3,6) [node] {}; %
        \path[->,>=latex] (2_0_0)%
			edge%
					(1_0_0)%
			edge%
					(1_0_1);%

        \node (2_0_1) at (11,6) [node] {}; %
        \path[->,>=latex] (2_0_1)%
			edge%
					(1_1_0)%
			edge%
					(1_1_1);%

        \node (root) at (7,8) [node] {}; %
        \path[->,>=latex] (root)%
			edge%
					(2_0_0)%
			edge%
					(2_0_1);%
      };
    \end{tikzpicture}\label{fig:tf-string-tree}}%
  \hskip6em%
  \subfigure[]{%
    \begin{tikzpicture}%
      [baseline=(0_0_0.base),scale=.22, minimum size=9,inner sep=1pt,
      node/.style={anchor=center,circle,draw=black!60,font=\scriptsize}] {
        \node (0_0_0) at (1,0) [node] {b};%
        \node (0_0_1) at (3,0) [node] {a};%
        \node (0_1_0) at (5,0) [node] {n};%

        \node (1_0_0) at (1,3) [node] {}; %
        \path[->,>=latex] (1_0_0)%
				edge %
					node[above left,minimum size=1ex,scale=.7]{$0$}%
				 		(0_0_0)
				edge %
					node[above right,minimum size=1ex,scale=.7]{$1$}%
						(0_0_1);%
		\node (1_0_1) at (5,3) [node] {}; %
        \path[->,>=latex] (1_0_1)%
				edge %
					node[above right,minimum size=1ex,scale=.7]{$0$}%
				 		(0_1_0)
				edge %
					node[above left,minimum size=1ex,scale=.7]{$1$}%
						(0_0_1);%

        \node (2_0_0) at (1,6) [node] {}; %
        \path[->,>=latex] (2_0_0)%
			edge%
				node[above left,minimum size=1ex,scale=.7]{$0$}%
					(1_0_0)%
			edge%
				node[above=8pt,left=3pt,minimum size=1ex,scale=.7]{$1$}%
					(1_0_1);%

        \node (2_0_1) at (5,6) [node] {}; %
        \path[->,>=latex] (2_0_1)%
			edge%
				node[above=8pt,right=3pt,minimum size=1ex,scale=.7]{$1$}%
					(1_0_0)%
			edge%
				node[above right,minimum size=1ex,scale=.7]{$0$}%
					(1_0_1);%

        \node (root) at (3,8) [node] {}; %
        \path[->,>=latex] (root)%
			edge%
				node[above left,minimum size=1ex,scale=.7]{$0$}%
					(2_0_0)%
			edge%
				node[above right,minimum size=1ex,scale=.7]{$1$}%
					(2_0_1);%
      };
    \end{tikzpicture}\label{fig:tf-string-dag}}%
  \caption{Trie-folding as string compression: a string (a), the complete
    binary trie (b), and the compressed DAG (c).  The third character of the
    string can be accessed by looking up the key $3-1 = 010_2$.}
\label{fig:tf-string-compress}
\end{figure*}

That being said, the last line of the $\triefold$ algorithm renders standard
trie lookup correct on prefix DAGs. Since this is precisely the lookup
algorithm implemented in many IP routers on the ASIC \cite{ezchip}, we
conclude that prefix DAGs can serve as compressed drop-in replacements for
trie-based FIBs in many router products (similarly to e.g.,
\cite{Uzmi:2011:SPN:2079296.2079325}).

The following statement is now obvious.
\begin{lemma}
  The $\lookup$ operation on $D(T)$ terminates in $O(W)$ time.
\end{lemma}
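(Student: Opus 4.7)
The plan is to observe that lookup on $D(T)$ is literally the classical trie traversal routine, instrumented only by the caveat about the $\perp$-label erased at $\lp(\perp)$, and then to bound both the per-step cost and the number of steps.

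First I would note that each iteration of the lookup loop does a constant amount of work: extract the next bit from the destination address, dereference one of the two child pointers of the current node, and update the ``last label seen'' register from the node's label field (if present). All of these are $O(1)$ operations on a pointer machine, so the running time is proportional to the number of pointer follows made before reaching a node with no child on the required side (at which point the last recorded label is returned).

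Next I would argue that the traversal performs at most $W$ pointer follows. The key structural observation is that trie-folding, as defined in Section~\ref{sec:trie-fold}, merges a node $w$ with another node $u$ only when $u$ and $w$ root identical sub-tries, in which case $w$ is replaced by $u$ and its parent is re-pointered to $u$. Such an operation never creates a child pointer from a depth-$d$ node to a node that in $T$ appeared above depth $d+1$; equivalently, if $v_0, v_1, \ldots, v_k$ is any root-to-leaf directed path in $D(T)$, then $k$ equals the length of a corresponding root-to-leaf path in $T$, which is at most $W$. Thus the \lookup{} loop can execute at most $W$ iterations before either hitting a leaf or exhausting the address bits.

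Combining the $O(1)$ per-iteration cost with the $O(W)$ bound on iterations yields the claimed $O(W)$ total lookup time. The only non-routine point is justifying the depth bound on root-to-leaf paths in the DAG, since $D(T)$ is no longer a tree and a single node may be reached by many distinct paths; but because the lookup follows one path determined by the address, and trie-folding preserves the level of every surviving node, this bound is immediate from the construction. The handling of $\perp$ at $\lp(\perp)$, already arranged by the last line of $\triefold$, ensures correctness but does not affect the asymptotic time.
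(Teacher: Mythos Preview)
Your argument is correct and matches the paper's reasoning; the paper in fact omits any proof and simply declares the statement ``obvious'' after observing that lookup on $D(T)$ is exactly the standard trie lookup routine. Your explicit justification that trie-folding preserves node levels (so every root-to-leaf path in $D(T)$ has length at most $W$) is the unstated content behind that word.
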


In this regard, trie-folding can be seen as a generalization of conventional
FIB implementations: for $\lambda=32$ we get good old prefix trees, and for
smaller settings of $\lambda$ we obtain \emph{increasingly smaller FIBs with
  exactly zero cost on lookup efficiency}.  Correspondingly, there is no
memory size vs. lookup complexity ``space-time'' trade-off in trie-folding.

\subsection{Memory Size Bounds}
\label{sec:trie-folding:storage}

The tries that underlie \XBWL are proper and leaf-labeled, and the nice
structure makes it easy to reason about the size thereof.  Unfortunately, the
tries that prefix DAGs derive from are of arbitrary shape and so it is
difficult to infer space bounds in the same generic sense.  We chose a
different approach, therefore, in that we \emph{view trie-folding as a
  generic string compression method and we compare the size of the prefix DAG
  to that of an input string} given to the algorithm.  The space bounds
obtained this way transform to prefix trees naturally, as trie entropy itself
is also defined in terms of string entropy (recall Proposition
\ref{prop:entropy}).

Instead of being given a trie, therefore, we are now given a string $S$ of
length $n$ on an alphabet of size $\delta$ and zero-order entropy $H_0$.
Supposing that $n$ equals some power of $2$ (which we do for the moment),
say, $n=2^W$, we can think as if the symbols in $S$ were written to the
leaves of a complete binary tree of depth $W$ as labels.  Then, trie-folding
will convert this tree into a prefix DAG $D(S)$, and we are curious as to how
the size of $D(S)$ relates to the information-theoretic limit for storing
$S$, that is, $n \lg \delta$, and the zero order entropy $n H_0$ (see
Fig.~\ref{fig:tf-string-compress}).  Note that every FIB has such a
``complete binary trie'' representation, and vice versa.

The memory model for storing the prefix DAG is as follows. Above the
leaf-push barrier $\lambda$ we use the usual trick that the children of a
node are arranged on consecutive memory locations~\cite{772439}, and so each
node holds a single node pointer of size to be determined later, plus a label
index of $\lg \delta$ bits. At and below level $\lambda$ nodes hold two
pointers but no label, plus we also need an additional $\delta \lg \delta$
bits to store the coalesced leaves.

Now, we are in a position to state the first space bound.  In particular, we
show that $D(S)$ attains information-theoretic lower bound up to some small
constant factor, and so it is a \emph{compact data structure}.  Our result
improves the constant term in the bound available in \cite{sigcomm_2013} from
$5$ to $4$.
\begin{theorem}\label{thm:proof_compact_data}
  Let $S$ be a string of length $n=2^W$ on an alphabet of size $\delta$ and
  set the leaf-push barrier as

  \begin{equation}
    \label{eq:leaf-push-barrier-delta}
    \lambda = \Big \lfloor \frac1{\ln 2} \lambertW\left( n \ln \delta
    \right) \Big \rfloor \enspace ,
  \end{equation}
  where $\lambertW()$ denotes the Lambert $\lambertW$-function.  Then, $D(S)$
  can be encoded on at most $4 \lg(\delta)n + o(n)$ bits.
\end{theorem}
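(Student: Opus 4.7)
My plan is to partition the storage cost of $D(S)$ into the part contributed by the tree above the leaf-push barrier and the part contributed by the shared DAG at and below it, and to show that the choice of $\lambda$ in~\eqref{eq:leaf-push-barrier-delta} exactly balances the two contributions.

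First, I would count the nodes above the barrier. Levels $0,\ldots,\lambda-1$ form a complete binary tree with $2^{\lambda}-1$ nodes; under the memory model stated in the paragraph preceding the theorem each such node stores a single child pointer of $p$ bits and a label index of $\lg \delta$ bits, for a contribution of at most $(2^{\lambda}-1)(p+\lg\delta)$ bits. Next, I would bound the number of distinct DAG nodes at each level $j\ge\lambda$ by $\min(2^{j},\delta^{2^{W-j}})$: at most $2^{j}$ positions exist at level $j$ in the underlying complete trie, while a leaf-labeled subtrie of depth $W-j$ admits only $\delta^{2^{W-j}}$ possible shapes, so two of the original $2^j$ sub-tries are either merged or they correspond to distinct shapes. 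The pivotal arithmetic step is that the defining identity of the Lambert-$\lambertW$ function, $(\lambda\ln 2)\,e^{\lambda\ln 2}=n\ln\delta$, rearranges to $2^{\lambda}=\delta^{2^{W-\lambda}}$, so the two bounds on the DAG size coincide exactly at $j=\lambda$. For $j>\lambda$ the quantity $\delta^{2^{W-j}}$ decays super-geometrically --- successive values satisfy $\delta^{2^{W-j-1}}=\sqrt{\delta^{2^{W-j}}}$, giving the sequence $2^{\lambda},2^{\lambda/2},2^{\lambda/4},\ldots$ --- hence the total number of internal DAG nodes is $2^{\lambda}+O(2^{\lambda/2})$, and the whole DAG together with the $\delta$-entry leaf table is addressable with pointers of size $p=\lambda+O(1)$.

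From here the bound reduces to arithmetic. Combining the two layers and the leaf table gives total storage at most $(2^{\lambda}-1)(p+\lg\delta)+2\cdot 2^{\lambda}p+\delta\lg\delta+o(2^{\lambda})$, whose dominant term is $3\lambda\cdot 2^{\lambda}$. Using the identity $\lambda\cdot 2^{\lambda}=n\lg\delta$ (a direct rearrangement of the defining equation of $\lambertW$), the leading contribution is $3n\lg\delta$, which sits comfortably inside the claimed $4n\lg\delta+o(n)$ once the remainders $O(2^{\lambda}\lg\delta)$ and $\delta\lg\delta$ are absorbed into $o(n)$ under the standing assumption $\delta=O(\polylog n)$. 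The main difficulty is the delicate bookkeeping around the floor in~\eqref{eq:leaf-push-barrier-delta} and the tail sum $\sum_{j>\lambda}\delta^{2^{W-j}}$: it is precisely the super-geometric decay past the crossover level that prevents a spurious $\lambda$-factor from multiplying $n\lg\delta$ and inflating the leading constant beyond the stated $4$.
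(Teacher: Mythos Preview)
Your approach is essentially the paper's: bound the nodes at level $j$ by $\min(2^j,\delta^{2^{W-j}})$, identify the crossover $\kappa$ via $\kappa\,2^{\kappa}=n\log_2\delta$, set $\lambda=\lfloor\kappa\rfloor$, and exploit the square-root decay of the $\delta$-bound below the crossover.

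The one gap is precisely the floor issue you flag but do not actually resolve. Your sequence $2^{\lambda},2^{\lambda/2},2^{\lambda/4},\ldots$ presumes the identity $2^{\lambda}=\delta^{2^{W-\lambda}}$, which holds only at the real crossover $\kappa$, not at $\lambda=\lfloor\kappa\rfloor$. Since $\kappa-1<\lambda\le\kappa$, one has $\delta^{2^{W-\lambda}}\in[2^{\kappa},2^{2\kappa})$, so at level $\lambda+1$ the $\delta$-bound yields only $\delta^{2^{W-\lambda-1}}<2^{\kappa}$, which can be as large as $\approx 2\cdot 2^{\lambda}$, not $2^{\lambda/2}$. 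This extra full-size level is exactly what the paper tracks: it bounds levels $0,\ldots,\lambda$ by $2\cdot 2^{\kappa}$ nodes, level $\lambda+1$ separately by $2^{\kappa}$, and only from level $\lambda+2$ onward invokes the $\sqrt{\,\cdot\,}$ decay; that is why the leading constant comes out as $4$ rather than your $3$. Once you insert this missing level your argument reproduces the claimed $4n\lg\delta+o(n)$, but the constant $3$ you derive is not justified.
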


Note that the Lambert $\lambertW$-function (or product logarithm)
$\lambertW(z)$ is defined as $z = \lambertW(z) e^{\lambertW(z)}$.  The
detailed proof, based on a counting argument, is deferred to the Appendix.

Next, we show that trie-folding compresses to within a constant factor of the
zero-order entropy bound, subject to some reasonable assumptions on the
alphabet.  Furthermore, the constant term is improved from $7$ as of
\cite{sigcomm_2013} to $6$.
\begin{theorem}\label{thm:proof_entropy}
  Let $S$ be a string of length $n$ and zero-order entropy $H_0$, and set the
  leaf-push barrier as
  \begin{equation}
    \label{eq:leaf-push-barrier-H0}
    \lambda = \Big \lfloor  \frac1{\ln 2} \lambertW(n H_0 \ln 2)  \Big
    \rfloor \enspace .
  \end{equation}
  Then, the expected size of $D(S)$ is at most $(6 + 2\lg \frac1{H_0} +
  2\lg\lg \delta )H_0n + o(n)$ bits.
\end{theorem}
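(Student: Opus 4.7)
The plan is to mirror the proof of Theorem~\ref{thm:proof_compact_data}, replacing the worst-case $\lg\delta$ bits-per-leaf bookkeeping with an averaged bound driven by $H_0$, and to route the change through the choice of leaf-push barrier. As before, I would split $|D(S)|$ into (i) a top part, which is a conventional binary prefix tree on the first $\lambda$ levels with at most $2^{\lambda+1}$ nodes, each storing a child pointer and a label index; and (ii) a bottom part, which is the folded DAG over the $2^\lambda$ sub-tries of depth $\ell=2^{W-\lambda}$, plus the leaf table of $\delta$ entries of $\lg\delta$ bits each.

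The key new ingredient is an entropic bound on the number of distinct sub-tries in the DAG. Let $d_k$ denote the number of distinct sub-tries of depth $k$ (counted from the bottom) appearing at aligned positions of $S$. Two bounds are available: the deterministic combinatorial bound $d_k \le \min\{n/2^k,\,d_{k-1}^2\}$, inherited from counting the positions and the children; and a probabilistic bound $\mathbb{E}[d_k] \le 2^{2^kH_0}(1+o(1))$, obtained by treating aligned $2^k$-blocks as i.i.d.\ samples of a product source of entropy $2^kH_0$ and invoking the asymptotic equipartition property to confine them to the typical set. Because $2^{2^kH_0}$ grows doubly exponentially while $n/2^k$ halves, the two bounds cross at a unique level, and the equation that locates the crossover is, after the substitution $\lambda=W-k$, exactly $\lambda\,2^\lambda=nH_0$, whose Lambert-$\lambertW$ solution is precisely~\eqref{eq:leaf-push-barrier-H0}. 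Since the combinatorial bound forces $d_{k-1}\le\sqrt{d_k}$, the sum $\sum_k d_k$ is dominated geometrically by its largest term $d_{W-\lambda}=O(2^\lambda)$, so the bottom part hosts $O(2^\lambda)=O(nH_0/\lambda)$ nodes as well.

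At this point, each node occupies a pointer of width $O(\lg 2^\lambda)=O(\lambda)$ bits plus a label index of $\lg\delta$ bits, giving a total of $O(2^\lambda(\lambda+\lg\delta)) = O(nH_0) + O(nH_0\lg\delta/\lambda)$ bits. Substituting the standard expansion $\lambda = \lg(nH_0\ln 2) - \lg\ln(nH_0\ln 2) + O(1)$ and carefully separating the two $O(\cdot)$ contributions lets me read off the leading constants $6 + 2\lg\frac1{H_0} + 2\lg\lg\delta$, with the Lambert-$\lambertW$ correction terms, the slack hidden in the $O(2^\lambda)$ inequalities, and the leaf-table term $\delta\lg\delta$ all absorbed into $o(n)$. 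The most delicate step, and the one I expect to be the main obstacle, is keeping the entropic slack under control: a crude typical-set bound only yields $2^{2^k(H_0+\varepsilon)}$, and the excess $\varepsilon\,2^k$ must be tamed (via a Chernoff-type concentration on the empirical probability of each $2^k$-block) so that its contribution across all depths $k\le W-\lambda$ telescopes into the $o(n)$ remainder without inflating the leading constants; the rest is routine calculation.
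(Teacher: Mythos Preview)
Your plan diverges from the paper's argument in a way that leaves two genuine gaps.

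\textbf{The entropy bound you use is not the one that drives the theorem.} The paper does not bound the number of distinct depth-$k$ sub-tries by the typical-set size $2^{2^k H_0}$. Instead it proves a coupon-collector lemma (Lemma~\ref{lem:average_coupon}): drawing $m$ coupons from a source of entropy $H_C$ yields $E[|V|]\le \tfrac{m}{\log_2 m}H_C+3$. Applied level by level this gives the \emph{middle} bound $E(|V_D^j|)\le nH_0/j+3$, which together with $2^j$ and $\delta^{2^{W-j}}$ produces a three-region ``head/body/tail'' decomposition. The body spans the levels between $\xi$ (where $2^j$ meets $nH_0/j$) and $\zeta$ (where $nH_0/j$ meets $\delta^{2^{W-j}}$); its width is $\zeta-\xi\approx \lg\lg\delta-\lg H_0$, and \emph{that} is where the factor $2\lg(1/H_0)+2\lg\lg\delta$ in the statement comes from. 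Your two-bound picture has no body, so there is no mechanism in your outline that could ever produce those two terms. Concretely, your final accounting $O(2^\lambda(\lambda+\lg\delta))=O(nH_0)+O(nH_0\lg\delta/\lambda)$ has its second summand of order $nH_0\lg\delta/\lambda=o(n)$, hence it cannot contribute to the leading constant at all; ``reading off'' $6+2\lg(1/H_0)+2\lg\lg\delta$ from it is not a calculation but a leap.

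\textbf{The AEP bound itself is not valid here.} At levels $j>\lambda$ the number of draws $m=2^j$ vastly exceeds the typical-set size, so atypical blocks \emph{do} appear and the expected number of distinct blocks can exceed $2^{2^kH_0}$ by an arbitrary factor (the true ceiling is $\delta^{2^k}$, exactly the paper's third bound). The $\varepsilon$-slack you mention is not a perturbation to be ``tamed'': for deep levels $2^k\varepsilon$ blows up and no Chernoff argument on a single block rescues the sum. A smaller slip in the same paragraph: the combinatorial relation $d_k\le d_{k-1}^2$ yields $d_{k-1}\ge\sqrt{d_k}$, not $\le$; the square-root decay you want toward the leaves must come from an upper bound like $\delta^{2^{W-j}}$ (or a correct AEP-type bound), not from the combinatorial inequality, which points the wrong way.
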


Again, refer to the Appendix for the proof.

It turns out that the compression ratio depends on the specifics of the
alphabet. For reasonable $\delta$, say, $\delta=O(1)$ or $\delta=O(\polylog
n)$, the error $\lg\lg\delta$ is very small and the bound gradually improves
as $H_0$ increases, to the point that at maximum entropy $H_0 = \lg \delta$
we get precisely $6H_0 n$. For extremely small entropy, however, the error
$2\lg \frac1{H_0}$ can become dominant as the overhead of the DAG outweighs
the size of the very string in such cases.

\subsection{Update}
\label{sec:trie-folding:dynamic}

What remained to be done is to set the leaf-push barrier $\lambda$ in order
to properly balance between compression efficiency and update complexity.
Crucially, small storage can only be attained if the leaf-push barrier is
chosen according to \eqref{eq:leaf-push-barrier-H0}.  Strikingly, we found
that precisely this setting is the key to fast FIB updates as
well\footnote{Note that \eqref{eq:leaf-push-barrier-H0} transforms into
  \eqref{eq:leaf-push-barrier-delta} at maximum entropy.}.

Herein, we only specify the $\update$ operation that changes an
\emph{existing} association for prefix $a$ of prefix length $p$ to the new
label $s$ or, within the string model, rewrites an entire block of symbols at
the lowest level of the tree with a new one.  Adding a new entry or deleting
an existing one can be done in similar vein.

\vspace{.3em}\hrule height 1pt\vspace{.2em}%
  \begin{small}
    \begin{algorithmic}
      \algrenewcommand{\algorithmicfunction}{$\update$}

      \Function{}{address $a$, integer $p$, label $s$, integer $\lambda$}

      \State{$v \gets D(T).\Root$; $q \gets 0$}

      \While{$q < p$}

      \State{\textbf{if} $q \ge \lambda$ \textbf{then} $v \gets
        \decompress(v)$}

      \State{\textbf{if} $\getbits(a, q, 1) = 0$ \textbf{then} $v \gets
        v.\Left$ \textbf{else} $v \gets v.\Right$}

      \State{$q \gets q + 1$}

      \EndWhile

      \State{\textbf{if} $p < \lambda$ \textbf{then} $l(v) \gets s$; \textbf{return}}
      
      \State{$w \gets T.\Copy(v)$; re-pointer the parent of $v$ to $w$; $l(w) \gets s$}

      \State{\textsc{postorder-traverse-at-$v$}($u$: $\Get(u.\Left.\id,
        u.\Right.\id)$)}

      \State{$\triefold(w, 0)$}

      \State{\textbf{for each} parent $u$ of $w$: $\level(u) \ge \lambda$
        \textbf{do} $\compress(u)$}

      \EndFunction

      \algrenewcommand{\algorithmicfunction}{$\decompress$}

      \Function{}{node $v$}

      \State{$w \gets \New$;  $w.\id \gets v.\id$}

      \State{\textbf{if} $v \in L_D$ \textbf{then} $l(w) \gets l(v)$}

      \State{$\qquad$\textbf{else} $w.\Left \gets v.\Left$; $w.\Right \gets
        v.\Right$}

      \State{$\qquad\qquad \Get(v.\Left.\id, v.\Right.\id)$}

      \State{re-pointer the parent of $v$ to $w$; \textbf{return} $w$}

      \EndFunction
    \end{algorithmic}
  \end{small}
\hrule\vspace{.5em}%

First, we walk down and decompress the DAG along the path traced out by the
successive bits of $a$ until we reach level $p$.  The $\decompress$ routine
copies a node out from the DAG and removes the reference wherever necessary.
At this point, if $p<\lambda$ then we simply update the label and we are
ready. Otherwise, we replace the sub-trie below $v$ with a new copy of the
corresponding sub-trie from $T$, taking care of calling $\Get$ on the
descendants of $v$ to remove dangling references, and we set the label on the
root $w$ of the new copy to $s$.  Then, we re-compress the portions of the
prefix DAG affected by the change, by calling $\triefold$ on $w$ and then
calling $\compress$ on all the nodes along the upstream path from $w$ towards
to root.
\begin{theorem}
  If the leaf-push barrier $\lambda$ is set as
  \eqref{eq:leaf-push-barrier-H0}, then $\update$ on $D(T)$ terminates in
  $O(W(1+\frac{1}{H_0}))$ time.
\end{theorem}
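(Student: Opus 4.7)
The plan is to decompose the cost of $\update$ into three parts: (i) the downward walk and path decompression, (ii) the local rebuild of the sub-trie at and below level $\lambda$, and (iii) the upward re-compression, and then bound each using the setting of $\lambda$ in \eqref{eq:leaf-push-barrier-H0}.

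First I would observe that the downward loop performs at most $p \le W$ iterations, and each iteration performs one $\decompress$ (a constant-time node copy plus one $\stind$ dereference) and a single child pointer follow. Hence the walk down the affected path costs $O(W)$. If $p<\lambda$ we return in $O(1)$ after that walk, and the claimed bound is immediate; otherwise, the next step copies the sub-trie rooted at $v$ out of the control FIB $T$ and releases the old references, then calls $\triefold(w,0)$ on the fresh copy. Since the control FIB is a plain binary trie of depth $W$, the sub-trie at depth $\ge \lambda$ has at most $2^{W-\lambda}$ nodes, so both the copy and the (single-pass) $\triefold$ and all its $\compress$ calls run in $O(2^{W-\lambda})$ amortized time using the hash-based $\stind$. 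Finally, the upward $\compress$ loop touches only ancestors above the barrier, of which there are at most $\lambda\le W$, contributing another $O(W)$.

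Summing, the total cost is $O(W + 2^{W-\lambda})$, and the key calculation is to show that the second term is $O(W/H_0)$ under \eqref{eq:leaf-push-barrier-H0}. For this I would use the defining identity $\lambertW(z)e^{\lambertW(z)}=z$. With $n=2^W$ and $\lambda=\lfloor \lambertW(nH_0\ln 2)/\ln 2\rfloor$, exponentiating gives
\begin{equation*}
2^{\lambda}\;\ge\;\tfrac{1}{2}\,e^{\lambertW(nH_0\ln 2)}\;=\;\tfrac{1}{2}\cdot\frac{nH_0\ln 2}{\lambertW(nH_0\ln 2)}\enspace,
\end{equation*}
hence
\begin{equation*}
2^{W-\lambda}\;\le\;\frac{2\,\lambertW(nH_0\ln 2)}{H_0\ln 2}\enspace.
\end{equation*}
Since $\lambertW(z)=\Theta(\ln z)$ for large $z$ and $nH_0\ln 2\le n\lg\delta = 2^{W}\lg\delta$ with $\delta=O(\mathrm{polylog}\,n)$, we have $\lambertW(nH_0\ln 2)=O(\log n)=O(W)$. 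Substituting yields $2^{W-\lambda}=O(W/H_0)$.

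Combining everything gives $O(W)+O(W/H_0)=O(W(1+1/H_0))$, as claimed. The main obstacle, and the only part that is not mere bookkeeping, is the Lambert-$\mathcal{W}$ manipulation in the last paragraph; the rest is a direct reading of the pseudocode together with the fact that the sub-trie below the barrier has geometrically bounded size. I would also note, for completeness, that the $O(1)$ amortized cost of the $\stind$ primitives is exactly the same assumption used in the construction bound preceding this theorem, so no new machinery is introduced.
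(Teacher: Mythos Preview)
Your proof is correct and follows essentially the same line as the paper's: decompose the work into the $O(W)$ path walk, the $O(2^{W-\lambda})$ sub-trie rebuild, and the $O(W)$ upward recompression, then show $2^{W-\lambda}=O(W/H_0)$ from the setting of $\lambda$. The paper does this last step by simply invoking the inequality $\lambda \ge W - \lg(W/H_0)$ (which is the bound \eqref{eq:bound_on_xi} in the Appendix), whereas you derive the same estimate directly from the identity $\lambertW(z)e^{\lambertW(z)}=z$; the two computations are equivalent. One minor slip: the final $\compress$ loop in the pseudocode runs over ancestors with $\level(u)\ge\lambda$, i.e., those between the barrier and $w$ (at most $p-\lambda$ of them), not the $\lambda$ ancestors strictly above the barrier as you wrote---but either count is at most $W$, so your $O(W)$ conclusion stands.
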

\begin{proof}
  If $p<\lambda$, then updating a single entry can be done in $O(W)$ time.
  If, on the other hand, $p \ge \lambda$, then \texttt{update} visits at most
  $W + 2^{W-\lambda} \le W + \frac{W}{H_0}$ nodes, using that $\lambda \ge W
  - \lg( \tfrac{W}{H_0})$ whenever $\lambda$ is as
  \eqref{eq:leaf-push-barrier-H0}.
\end{proof}

In summary, under mild assumptions on the label distribution a prefix DAG
realizes the Shannon-entropy up to a small factor and allows indexing
arbitrary elements and updates to any entry in roughly $O(\log n)$ time.  As
such, it is in fact a dynamic, entropy-compressed string self-index.  As far
as we are aware of, this is the first pointer machine of this kind, as the
rest of compressed string-indexes are pointerless.  Regrettably, both the
space bound and the update complexity weaken when the label distribution is
extremely biased, i.e., when $H_0$ is very small.  As we argue in the next
section though, this rarely causes problems in practice.

\section{Numerical Evaluations}
\label{sec:num-eval}

At this point, we have yet to demonstrate that the appealing theoretical
properties of compressed FIBs indeed manifest as practical benefits.  For
this reason, we conducted a series of numerical evaluations with the goal to
quantify the compressibility of real IP FIBs and see how our compressors
fare\footnoteE{{\it Erratum:} All results have been updated to the correct
  FIB entropy bound and the revised \XBWL transform.}. It was \emph{not} our
intention, however, to compare to other FIB storage schemes from the
literature, let alone evince that ours is the fastest or the most efficient
one.  After all, information-theoretic space bounds are purposed precisely at
making such comparisons unnecessary, serving as analytically justified ground
truth.  Instead, our motivation is merely to demonstrate that FIB compression
allows to reduce memory tax without any compromise on the efficiency of
longest prefix match or FIB updates.

For the evaluations, we coded up a full-fledged Linux prototype, where FIB
compression and update run in user space and IP lookup is performed by a
custom kernel module embedded in the kernel's IP stack.  The code executed on
a single core of a 2.50GHz Intel Core i5 CPU, with 2x32 Kbyte L1 data cache,
256 Kbyte L2 cache, and 3 Mbyte L3 cache.

Research on IP FIB data structures has for a long time been plagued by the
unavailability of real data, especially from the Internet core.  Alas, we
could obtain only $5$ FIB instances from real IP routers, each from the
access: \texttt{taz} and \texttt{hbone} are from a university access,
\texttt{access(d)} is from a default and \texttt{access(v)} from a virtual
instance of a service provider's router, and \texttt{mobile} is from a mobile
operator's access (see Table~\ref{tab:static}).  The first $3$ are in the
DFZ, the rest contain default routes.  Apart from these, however, what is
available publicly is RIB dumps from BGP collectors, like RouteViews or
looking glass servers (named \texttt{as*} in the data set).  Unfortunately,
these only very crudely model real FIBs, because collectors run the BGP
best-path selection algorithm on their peers and these adjacencies differ
greatly from real next hops on production routers.  We experimented with
heuristics to restore the original next-hop information (e.g., set next-hop
to the first AS-hop), but the results were basically the same.  Thus, these
FIBs are included in the data set only for
reference. 
We also used two randomly generated FIBs, one of 600,000 (\texttt{fib\_600k})
and another of 1 million prefixes (\texttt{fib\_1m}), to future-proof our
results.  These synthetic FIBs were generated by iterative random prefix
splitting and setting next-hops according to a truncated Poisson-distribution
with parameter $\tfrac{3}{5}$ ($H_0=1.06$, $\delta=4$).

\subsection{Update Complexity}
\label{sec:leaf-push-choose}

First, we set out to determine a good setting for the leaf-push barrier
$\lambda$.  Recall that $\lambda$ was introduced to balance between the
compression efficiency and update complexity (also recall that no such
compromise exists between compression and \emph{lookup}.).  Our theoretical
results provide the essential pointers to set $\lambda$ (see
\eqref{eq:leaf-push-barrier-delta} and \eqref{eq:leaf-push-barrier-H0}), but
these are for compressing strings over complete binary trees.  IP FIBs,
however, are not complete.

We exercised the memory footprint vs. update complexity trade-off by varying
$\lambda$ between $0$ and $32$.  The update time was measured over two update
sequences: a random one with IP prefixes uniformly distributed on
$[0,2^{32}-1]$ and prefix lengths on $[0,32]$, and a BGP-inspired one
corresponding to a real BGP router log taken from RouteViews.  Here, we
treated all BGP prefix announcements as generating a FIB update, with a
next-hop selected randomly according to the next-hop distribution of the FIB.
The results are mean values over $15$ runs of $7,500$ updates, each run
roughly corresponding to $15$ minutes worth of BGP churn.

Herein, we only show the results for the \texttt{taz} FIB instance in
Fig.~\ref{fig:memsize-vs-update-taz}.  The results suggest that standard
prefix trees (reproduced by the setting $\lambda=32$), while pretty fast to
update, occupy a lot of space.  Fully compressed DAGs ($\lambda=0$), in
contrast, consume an order of magnitude less space but are expensive to
modify.  There is a domain, however, at around $5\le\lambda\le 12$, where we
win essentially all the space reduction and still handle about $100,000$
updates per second (that's roughly two and a half hours of BGP update load).
What is more, \emph{the space-time trade-off only exists for the synthetic,
  random update sequence, but not for BGP updates}.  This is because BGP
updates are heavily biased towards longer prefixes (with a mean prefix length
of $21.87$), which implies that the size of leaf-pushed sub-tries needed to
be re-packed per update is usually very small, and hence update complexity is
relatively insensitive to $\lambda$.

\begin{figure}
  \centering
\begin{tikzpicture}[gnuplot,scale=1.4]
\gpsolidlines
\gpmonochromelines
\gpcolor{color=gp lt color border}
\gpsetlinetype{gp lt border}
\gpsetlinewidth{1.00}
\draw[gp path] (0.000,0.000)--(0.180,0.000);
\draw[gp path] (4.061,0.000)--(3.881,0.000);
\node[gp node right] at (-0.184,0.000) {0.1};
\draw[gp path] (0.000,0.595)--(0.180,0.595);
\draw[gp path] (4.061,0.595)--(3.881,0.595);
\node[gp node right] at (-0.184,0.595) {1};
\draw[gp path] (0.000,1.190)--(0.180,1.190);
\draw[gp path] (4.061,1.190)--(3.881,1.190);
\node[gp node right] at (-0.184,1.190) {10};
\draw[gp path] (0.000,1.785)--(0.180,1.785);
\draw[gp path] (4.061,1.785)--(3.881,1.785);
\node[gp node right] at (-0.184,1.785) {100};
\draw[gp path] (0.000,2.380)--(0.180,2.380);
\draw[gp path] (4.061,2.380)--(3.881,2.380);
\node[gp node right] at (-0.184,2.380) {1K};
\draw[gp path] (0.000,2.975)--(0.180,2.975);
\draw[gp path] (4.061,2.975)--(3.881,2.975);
\node[gp node right] at (-0.184,2.975) {10K};
\draw[gp path] (0.101,0.000)--(0.101,0.180);
\draw[gp path] (0.101,3.391)--(0.101,3.211);
\node[gp node center] at (0.101,-0.308) {130K$\qquad$};
\draw[gp path] (0.644,0.000)--(0.644,0.180);
\draw[gp path] (0.644,3.391)--(0.644,3.211);
\node[gp node center] at (0.644,-0.308) {200K};
\draw[gp path] (1.800,0.000)--(1.800,0.180);
\draw[gp path] (1.800,3.391)--(1.800,3.211);
\node[gp node center] at (1.800,-0.308) {500K};
\draw[gp path] (2.675,0.000)--(2.675,0.180);
\draw[gp path] (2.675,3.391)--(2.675,3.211);
\node[gp node center] at (2.675,-0.308) {1M};
\draw[gp path] (3.549,0.000)--(3.549,0.180);
\draw[gp path] (3.549,3.391)--(3.549,3.211);
\node[gp node center] at (3.549,-0.308) {2M};
\draw[gp path] (0.000,3.391)--(0.000,0.000)--(4.061,0.000)--(4.061,3.391)--cycle;
\node[gp node center,rotate=-270] at (-0.890,1.695) {Update [$\mu$sec]};
\node[gp node left] at (3.831,-0.271) {[byte]};
\node[gp node left] at (0.557,3.154) {$\lambda=0$};
\node[gp node left] at (0.557,1.453) {$\lambda=0$};
\node[gp node left] at (0.635,0.456) {$\lambda=11$};
\node[gp node left] at (3.021,1.028) {$\lambda=32$};
\node[gp node right] at (3.509,3.043) {random};
\gpcolor{color=gp lt color 0}
\gpsetlinetype{gp lt plot 0}
\draw[gp path] (0.414,3.267)--(0.414,3.148)--(0.414,3.070)--(0.414,3.053)--(0.414,2.942)%
  --(0.414,2.800)--(0.413,2.508)--(0.413,2.220)--(0.411,1.919)--(0.442,1.612)--(0.496,1.417)%
  --(0.501,1.226)--(0.528,1.065)--(0.581,0.970)--(0.648,0.905)--(0.736,0.862)--(0.941,0.802)%
  --(1.226,0.773)--(1.383,0.752)--(1.675,0.703)--(1.995,0.684)--(2.447,0.764)--(2.858,0.772)%
  --(3.291,0.793)--(3.623,0.783)--(3.936,0.548)--(3.936,0.552)--(3.936,0.549)--(3.936,0.546)%
  --(3.936,0.504)--(3.937,0.518)--(3.937,0.522);
\gpsetpointsize{6.00}
\gppoint{gp mark 1}{(0.414,3.267)}
\gppoint{gp mark 1}{(0.414,3.148)}
\gppoint{gp mark 1}{(0.414,3.070)}
\gppoint{gp mark 1}{(0.414,3.053)}
\gppoint{gp mark 1}{(0.414,2.942)}
\gppoint{gp mark 1}{(0.414,2.800)}
\gppoint{gp mark 1}{(0.413,2.508)}
\gppoint{gp mark 1}{(0.413,2.220)}
\gppoint{gp mark 1}{(0.411,1.919)}
\gppoint{gp mark 1}{(0.442,1.612)}
\gppoint{gp mark 1}{(0.496,1.417)}
\gppoint{gp mark 1}{(0.501,1.226)}
\gppoint{gp mark 1}{(0.528,1.065)}
\gppoint{gp mark 1}{(0.581,0.970)}
\gppoint{gp mark 1}{(0.648,0.905)}
\gppoint{gp mark 1}{(0.736,0.862)}
\gppoint{gp mark 1}{(0.941,0.802)}
\gppoint{gp mark 1}{(1.226,0.773)}
\gppoint{gp mark 1}{(1.383,0.752)}
\gppoint{gp mark 1}{(1.675,0.703)}
\gppoint{gp mark 1}{(1.995,0.684)}
\gppoint{gp mark 1}{(2.447,0.764)}
\gppoint{gp mark 1}{(2.858,0.772)}
\gppoint{gp mark 1}{(3.291,0.793)}
\gppoint{gp mark 1}{(3.623,0.783)}
\gppoint{gp mark 1}{(3.936,0.548)}
\gppoint{gp mark 1}{(3.936,0.552)}
\gppoint{gp mark 1}{(3.936,0.549)}
\gppoint{gp mark 1}{(3.936,0.546)}
\gppoint{gp mark 1}{(3.936,0.546)}
\gppoint{gp mark 1}{(3.936,0.504)}
\gppoint{gp mark 1}{(3.937,0.518)}
\gppoint{gp mark 1}{(3.937,0.522)}
\gppoint{gp mark 1}{(3.693,3.043)}
\gpcolor{color=gp lt color border}
\node[gp node right] at (3.509,2.706) {BGP};
\gpcolor{color=gp lt color 1}
\gpsetlinetype{gp lt plot 1}
\draw[gp path] (0.414,1.382)--(0.414,1.310)--(0.414,1.306)--(0.414,1.378)--(0.414,1.375)%
  --(0.414,1.372)--(0.413,1.240)--(0.413,1.247)--(0.411,1.260)--(0.442,1.264)--(0.496,1.254)%
  --(0.501,1.232)--(0.528,1.201)--(0.581,1.168)--(0.648,1.132)--(0.736,1.045)--(0.941,0.990)%
  --(1.226,0.894)--(1.383,0.842)--(1.675,0.781)--(1.995,0.581)--(2.447,0.526)--(2.858,0.398)%
  --(3.291,0.224)--(3.623,0.134)--(3.936,0.079)--(3.936,0.074)--(3.936,0.089)--(3.936,0.103)%
  --(3.936,0.089)--(3.936,0.083)--(3.937,0.085)--(3.937,0.096);
\gppoint{gp mark 2}{(0.414,1.382)}
\gppoint{gp mark 2}{(0.414,1.310)}
\gppoint{gp mark 2}{(0.414,1.306)}
\gppoint{gp mark 2}{(0.414,1.378)}
\gppoint{gp mark 2}{(0.414,1.375)}
\gppoint{gp mark 2}{(0.414,1.372)}
\gppoint{gp mark 2}{(0.413,1.240)}
\gppoint{gp mark 2}{(0.413,1.247)}
\gppoint{gp mark 2}{(0.411,1.260)}
\gppoint{gp mark 2}{(0.442,1.264)}
\gppoint{gp mark 2}{(0.496,1.254)}
\gppoint{gp mark 2}{(0.501,1.232)}
\gppoint{gp mark 2}{(0.528,1.201)}
\gppoint{gp mark 2}{(0.581,1.168)}
\gppoint{gp mark 2}{(0.648,1.132)}
\gppoint{gp mark 2}{(0.736,1.045)}
\gppoint{gp mark 2}{(0.941,0.990)}
\gppoint{gp mark 2}{(1.226,0.894)}
\gppoint{gp mark 2}{(1.383,0.842)}
\gppoint{gp mark 2}{(1.675,0.781)}
\gppoint{gp mark 2}{(1.995,0.581)}
\gppoint{gp mark 2}{(2.447,0.526)}
\gppoint{gp mark 2}{(2.858,0.398)}
\gppoint{gp mark 2}{(3.291,0.224)}
\gppoint{gp mark 2}{(3.623,0.134)}
\gppoint{gp mark 2}{(3.936,0.079)}
\gppoint{gp mark 2}{(3.936,0.074)}
\gppoint{gp mark 2}{(3.936,0.089)}
\gppoint{gp mark 2}{(3.936,0.103)}
\gppoint{gp mark 2}{(3.936,0.089)}
\gppoint{gp mark 2}{(3.936,0.083)}
\gppoint{gp mark 2}{(3.937,0.085)}
\gppoint{gp mark 2}{(3.937,0.096)}
\gppoint{gp mark 2}{(3.693,2.706)}
\gpcolor{color=gp lt color border}
\gpsetlinetype{gp lt border}
\draw[gp path] (0.000,3.391)--(0.000,0.000)--(4.061,0.000)--(4.061,3.391)--cycle;
\gpdefrectangularnode{gp plot 1}{\pgfpoint{0.000cm}{0.000cm}}{\pgfpoint{4.061cm}{3.391cm}}
\end{tikzpicture}
  \caption{Update time vs. memory footprint on \texttt{taz} for random and
    BGP update sequences.}
  \label{fig:memsize-vs-update-taz}
\end{figure}
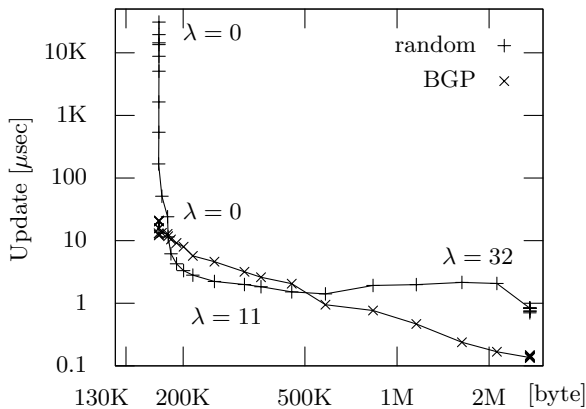

Based on these considerations, we set $\lambda=11$ for the rest of the
evaluations.

\begin{figure}
  \centering
\begin{tikzpicture}[gnuplot,scale=1.4]
\gpsolidlines
\gpmonochromelines
\gpcolor{color=gp lt color border}
\gpsetlinetype{gp lt border}
\gpsetlinewidth{1.00}
\draw[gp path] (0.000,0.754)--(0.180,0.754);
\node[gp node right] at (-0.184,0.754) {50};
\draw[gp path] (0.000,1.507)--(0.180,1.507);
\node[gp node right] at (-0.184,1.507) {100};
\draw[gp path] (0.000,2.261)--(0.180,2.261);
\node[gp node right] at (-0.184,2.261) {150};
\draw[gp path] (0.000,3.014)--(0.180,3.014);
\node[gp node right] at (-0.184,3.014) {200};
\draw[gp path] (0.000,0.000)--(0.000,0.180);
\draw[gp path] (0.000,3.391)--(0.000,3.211);
\node[gp node center] at (0.000,-0.308) {0.005};
\draw[gp path] (0.577,0.000)--(0.577,0.180);
\draw[gp path] (0.577,3.391)--(0.577,3.211);
\node[gp node center] at (0.577,-0.308) {0.01};
\draw[gp path] (1.154,0.000)--(1.154,0.180);
\draw[gp path] (1.154,3.391)--(1.154,3.211);
\node[gp node center] at (1.154,-0.308) {0.02};
\draw[gp path] (1.916,0.000)--(1.916,0.180);
\draw[gp path] (1.916,3.391)--(1.916,3.211);
\node[gp node center] at (1.916,-0.308) {0.05};
\draw[gp path] (2.493,0.000)--(2.493,0.180);
\draw[gp path] (2.493,3.391)--(2.493,3.211);
\node[gp node center] at (2.493,-0.308) {0.1};
\draw[gp path] (3.070,0.000)--(3.070,0.180);
\draw[gp path] (3.070,3.391)--(3.070,3.211);
\node[gp node center] at (3.070,-0.308) {0.2};
\draw[gp path] (3.833,0.000)--(3.833,0.180);
\draw[gp path] (3.833,3.391)--(3.833,3.211);
\node[gp node center] at (3.833,-0.308) {0.5};
\draw[gp path] (3.833,0.283)--(3.653,0.283);
\node[gp node left] at (4.017,0.283) { 1};
\draw[gp path] (3.833,0.848)--(3.653,0.848);
\node[gp node left] at (4.017,0.848) { 2};
\draw[gp path] (3.833,1.413)--(3.653,1.413);
\node[gp node left] at (4.017,1.413) { 3};
\draw[gp path] (3.833,1.978)--(3.653,1.978);
\node[gp node left] at (4.017,1.978) { 4};
\draw[gp path] (3.833,2.543)--(3.653,2.543);
\node[gp node left] at (4.017,2.543) { 5};
\draw[gp path] (3.833,3.108)--(3.653,3.108);
\node[gp node left] at (4.017,3.108) { 6};
\draw[gp path] (0.000,3.391)--(0.000,0.000)--(3.833,0.000)--(3.833,3.391)--cycle;
\node[gp node center,rotate=-270] at (-1.074,1.695) {Storage size [Kbytes]};
\node[gp node center,rotate=-270] at (4.722,1.695) {Compression efficiency};
\node[gp node right] at (2.100,3.031) {$H_0$};
\gpcolor{color=gp lt color 0}
\gpsetlinetype{gp lt plot 0}
\draw[gp path] (0.000,0.056)--(0.577,0.095)--(0.914,0.131)--(1.340,0.200)--(1.916,0.331)%
  --(2.254,0.441)--(2.493,0.532)--(2.831,0.685)--(3.070,0.802)--(3.256,0.897)--(3.408,0.964)%
  --(3.536,1.016)--(3.647,1.054)--(3.745,1.073)--(3.833,1.081);
\gpsetpointsize{8.00}
\gppoint{gp mark 1}{(0.000,0.056)}
\gppoint{gp mark 1}{(0.577,0.095)}
\gppoint{gp mark 1}{(0.914,0.131)}
\gppoint{gp mark 1}{(1.340,0.200)}
\gppoint{gp mark 1}{(1.916,0.331)}
\gppoint{gp mark 1}{(2.254,0.441)}
\gppoint{gp mark 1}{(2.493,0.532)}
\gppoint{gp mark 1}{(2.831,0.685)}
\gppoint{gp mark 1}{(3.070,0.802)}
\gppoint{gp mark 1}{(3.256,0.897)}
\gppoint{gp mark 1}{(3.408,0.964)}
\gppoint{gp mark 1}{(3.536,1.016)}
\gppoint{gp mark 1}{(3.647,1.054)}
\gppoint{gp mark 1}{(3.745,1.073)}
\gppoint{gp mark 1}{(3.833,1.081)}
\gppoint{gp mark 1}{(2.284,3.031)}
\gpcolor{color=gp lt color border}
\node[gp node right] at (2.100,2.671) {xbwb};
\gpcolor{color=gp lt color 1}
\gpsetlinetype{gp lt plot 1}
\draw[gp path] (0.000,0.075)--(0.577,0.123)--(0.914,0.166)--(1.340,0.245)--(1.916,0.388)%
  --(2.254,0.502)--(2.493,0.593)--(2.831,0.738)--(3.070,0.843)--(3.256,0.925)--(3.408,0.981)%
  --(3.536,1.023)--(3.647,1.053)--(3.745,1.068)--(3.833,1.074);
\gppoint{gp mark 2}{(0.000,0.075)}
\gppoint{gp mark 2}{(0.577,0.123)}
\gppoint{gp mark 2}{(0.914,0.166)}
\gppoint{gp mark 2}{(1.340,0.245)}
\gppoint{gp mark 2}{(1.916,0.388)}
\gppoint{gp mark 2}{(2.254,0.502)}
\gppoint{gp mark 2}{(2.493,0.593)}
\gppoint{gp mark 2}{(2.831,0.738)}
\gppoint{gp mark 2}{(3.070,0.843)}
\gppoint{gp mark 2}{(3.256,0.925)}
\gppoint{gp mark 2}{(3.408,0.981)}
\gppoint{gp mark 2}{(3.536,1.023)}
\gppoint{gp mark 2}{(3.647,1.053)}
\gppoint{gp mark 2}{(3.745,1.068)}
\gppoint{gp mark 2}{(3.833,1.074)}
\gppoint{gp mark 2}{(2.284,2.671)}
\gpcolor{color=gp lt color border}
\node[gp node right] at (2.100,2.311) {pDAG};
\gpcolor{color=gp lt color 2}
\gpsetlinetype{gp lt plot 2}
\draw[gp path] (0.000,0.193)--(0.577,0.292)--(0.914,0.368)--(1.340,0.586)--(1.916,1.007)%
  --(2.254,1.308)--(2.493,1.546)--(2.831,2.125)--(3.070,2.458)--(3.256,2.732)--(3.408,2.936)%
  --(3.536,3.091)--(3.647,3.199)--(3.745,3.269)--(3.833,3.285);
\gppoint{gp mark 3}{(0.000,0.193)}
\gppoint{gp mark 3}{(0.577,0.292)}
\gppoint{gp mark 3}{(0.914,0.368)}
\gppoint{gp mark 3}{(1.340,0.586)}
\gppoint{gp mark 3}{(1.916,1.007)}
\gppoint{gp mark 3}{(2.254,1.308)}
\gppoint{gp mark 3}{(2.493,1.546)}
\gppoint{gp mark 3}{(2.831,2.125)}
\gppoint{gp mark 3}{(3.070,2.458)}
\gppoint{gp mark 3}{(3.256,2.732)}
\gppoint{gp mark 3}{(3.408,2.936)}
\gppoint{gp mark 3}{(3.536,3.091)}
\gppoint{gp mark 3}{(3.647,3.199)}
\gppoint{gp mark 3}{(3.745,3.269)}
\gppoint{gp mark 3}{(3.833,3.285)}
\gppoint{gp mark 3}{(2.284,2.311)}
\gpcolor{color=gp lt color border}
\node[gp node right] at (2.100,1.951) {$\nu$};
\gpcolor{color=gp lt color 3}
\gpsetlinetype{gp lt plot 3}
\draw[gp path] (0.000,1.674)--(0.577,1.462)--(0.914,1.297)--(1.340,1.377)--(1.916,1.436)%
  --(2.254,1.394)--(2.493,1.360)--(2.831,1.470)--(3.070,1.449)--(3.256,1.439)--(3.408,1.439)%
  --(3.536,1.436)--(3.647,1.433)--(3.745,1.439)--(3.833,1.434);
\gppoint{gp mark 4}{(0.000,1.674)}
\gppoint{gp mark 4}{(0.577,1.462)}
\gppoint{gp mark 4}{(0.914,1.297)}
\gppoint{gp mark 4}{(1.340,1.377)}
\gppoint{gp mark 4}{(1.916,1.436)}
\gppoint{gp mark 4}{(2.254,1.394)}
\gppoint{gp mark 4}{(2.493,1.360)}
\gppoint{gp mark 4}{(2.831,1.470)}
\gppoint{gp mark 4}{(3.070,1.449)}
\gppoint{gp mark 4}{(3.256,1.439)}
\gppoint{gp mark 4}{(3.408,1.439)}
\gppoint{gp mark 4}{(3.536,1.436)}
\gppoint{gp mark 4}{(3.647,1.433)}
\gppoint{gp mark 4}{(3.745,1.439)}
\gppoint{gp mark 4}{(3.833,1.434)}
\gppoint{gp mark 4}{(2.284,1.951)}
\gpcolor{color=gp lt color border}
\gpsetlinetype{gp lt border}
\draw[gp path] (0.000,3.391)--(0.000,0.000)--(3.833,0.000)--(3.833,3.391)--cycle;
\gpdefrectangularnode{gp plot 1}{\pgfpoint{0.000cm}{0.000cm}}{\pgfpoint{3.833cm}{3.391cm}}
\end{tikzpicture}
  \caption{Size and compression efficiency $\nu$ over FIBs with
    Bernoulli distributed next-hops as the function of parameter $p$.}
  \label{fig:bernoulli-fib}
\end{figure}
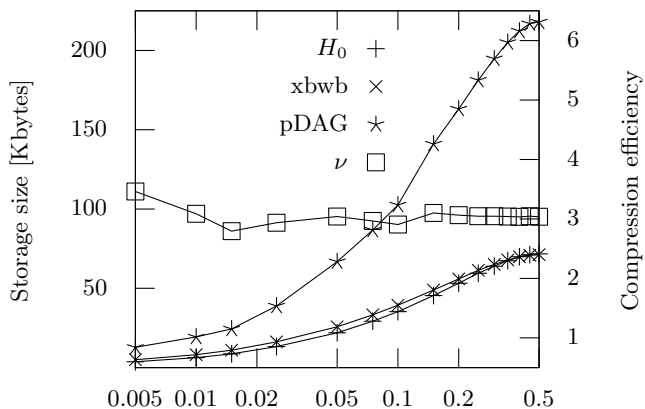

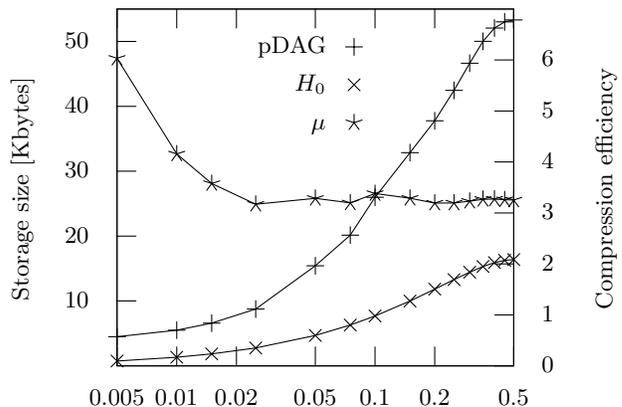
\begin{figure}
  \centering
\begin{tikzpicture}[gnuplot,scale=1.4]
\gpsolidlines
\gpmonochromelines
\gpcolor{color=gp lt color border}
\gpsetlinetype{gp lt border}
\gpsetlinewidth{1.00}
\draw[gp path] (0.000,0.617)--(0.180,0.617);
\node[gp node right] at (-0.184,0.617) {10};
\draw[gp path] (0.000,1.233)--(0.180,1.233);
\node[gp node right] at (-0.184,1.233) {20};
\draw[gp path] (0.000,1.850)--(0.180,1.850);
\node[gp node right] at (-0.184,1.850) {30};
\draw[gp path] (0.000,2.466)--(0.180,2.466);
\node[gp node right] at (-0.184,2.466) {40};
\draw[gp path] (0.000,3.083)--(0.180,3.083);
\node[gp node right] at (-0.184,3.083) {50};
\draw[gp path] (0.000,0.000)--(0.000,0.180);
\draw[gp path] (0.000,3.391)--(0.000,3.211);
\node[gp node center] at (0.000,-0.308) {0.005};
\draw[gp path] (0.567,0.000)--(0.567,0.180);
\draw[gp path] (0.567,3.391)--(0.567,3.211);
\node[gp node center] at (0.567,-0.308) {0.01};
\draw[gp path] (1.134,0.000)--(1.134,0.180);
\draw[gp path] (1.134,3.391)--(1.134,3.211);
\node[gp node center] at (1.134,-0.308) {0.02};
\draw[gp path] (1.883,0.000)--(1.883,0.180);
\draw[gp path] (1.883,3.391)--(1.883,3.211);
\node[gp node center] at (1.883,-0.308) {0.05};
\draw[gp path] (2.450,0.000)--(2.450,0.180);
\draw[gp path] (2.450,3.391)--(2.450,3.211);
\node[gp node center] at (2.450,-0.308) {0.1};
\draw[gp path] (3.017,0.000)--(3.017,0.180);
\draw[gp path] (3.017,3.391)--(3.017,3.211);
\node[gp node center] at (3.017,-0.308) {0.2};
\draw[gp path] (3.767,0.000)--(3.767,0.180);
\draw[gp path] (3.767,3.391)--(3.767,3.211);
\node[gp node center] at (3.767,-0.308) {0.5};
\draw[gp path] (3.767,0.000)--(3.587,0.000);
\node[gp node left] at (3.951,0.000) { 0};
\draw[gp path] (3.767,0.484)--(3.587,0.484);
\node[gp node left] at (3.951,0.484) { 1};
\draw[gp path] (3.767,0.969)--(3.587,0.969);
\node[gp node left] at (3.951,0.969) { 2};
\draw[gp path] (3.767,1.453)--(3.587,1.453);
\node[gp node left] at (3.951,1.453) { 3};
\draw[gp path] (3.767,1.938)--(3.587,1.938);
\node[gp node left] at (3.951,1.938) { 4};
\draw[gp path] (3.767,2.422)--(3.587,2.422);
\node[gp node left] at (3.951,2.422) { 5};
\draw[gp path] (3.767,2.907)--(3.587,2.907);
\node[gp node left] at (3.951,2.907) { 6};
\draw[gp path] (0.000,3.391)--(0.000,0.000)--(3.767,0.000)--(3.767,3.391)--cycle;
\node[gp node center,rotate=-270] at (-0.890,1.695) {Storage size [Kbytes]};
\node[gp node center,rotate=-270] at (4.656,1.695) {Compression efficiency};
\node[gp node right] at (2.067,3.031) {pDAG};
\gpcolor{color=gp lt color 0}
\gpsetlinetype{gp lt plot 0}
\draw[gp path] (0.000,0.277)--(0.567,0.339)--(0.899,0.406)--(1.317,0.540)--(1.883,0.951)%
  --(2.215,1.242)--(2.450,1.602)--(2.782,2.026)--(3.017,2.329)--(3.200,2.618)--(3.349,2.876)%
  --(3.475,3.084)--(3.584,3.210)--(3.681,3.269)--(3.767,3.287);
\gpsetpointsize{8.00}
\gppoint{gp mark 1}{(0.000,0.277)}
\gppoint{gp mark 1}{(0.567,0.339)}
\gppoint{gp mark 1}{(0.899,0.406)}
\gppoint{gp mark 1}{(1.317,0.540)}
\gppoint{gp mark 1}{(1.883,0.951)}
\gppoint{gp mark 1}{(2.215,1.242)}
\gppoint{gp mark 1}{(2.450,1.602)}
\gppoint{gp mark 1}{(2.782,2.026)}
\gppoint{gp mark 1}{(3.017,2.329)}
\gppoint{gp mark 1}{(3.200,2.618)}
\gppoint{gp mark 1}{(3.349,2.876)}
\gppoint{gp mark 1}{(3.475,3.084)}
\gppoint{gp mark 1}{(3.584,3.210)}
\gppoint{gp mark 1}{(3.681,3.269)}
\gppoint{gp mark 1}{(3.767,3.287)}
\gppoint{gp mark 1}{(2.251,3.031)}
\gpcolor{color=gp lt color border}
\node[gp node right] at (2.067,2.671) {$H_0$};
\gpcolor{color=gp lt color 1}
\gpsetlinetype{gp lt plot 1}
\draw[gp path] (0.000,0.046)--(0.567,0.082)--(0.899,0.114)--(1.317,0.170)--(1.883,0.289)%
  --(2.215,0.388)--(2.450,0.474)--(2.782,0.616)--(3.017,0.729)--(3.200,0.820)--(3.349,0.890)%
  --(3.475,0.944)--(3.584,0.981)--(3.681,1.003)--(3.767,1.010);
\gppoint{gp mark 2}{(0.000,0.046)}
\gppoint{gp mark 2}{(0.567,0.082)}
\gppoint{gp mark 2}{(0.899,0.114)}
\gppoint{gp mark 2}{(1.317,0.170)}
\gppoint{gp mark 2}{(1.883,0.289)}
\gppoint{gp mark 2}{(2.215,0.388)}
\gppoint{gp mark 2}{(2.450,0.474)}
\gppoint{gp mark 2}{(2.782,0.616)}
\gppoint{gp mark 2}{(3.017,0.729)}
\gppoint{gp mark 2}{(3.200,0.820)}
\gppoint{gp mark 2}{(3.349,0.890)}
\gppoint{gp mark 2}{(3.475,0.944)}
\gppoint{gp mark 2}{(3.584,0.981)}
\gppoint{gp mark 2}{(3.681,1.003)}
\gppoint{gp mark 2}{(3.767,1.010)}
\gppoint{gp mark 2}{(2.251,2.671)}
\gpcolor{color=gp lt color border}
\node[gp node right] at (2.067,2.311) {$\mu$};
\gpcolor{color=gp lt color 2}
\gpsetlinetype{gp lt plot 2}
\draw[gp path] (0.000,2.923)--(0.567,2.014)--(0.899,1.733)--(1.317,1.536)--(1.883,1.592)%
  --(2.215,1.549)--(2.450,1.638)--(2.782,1.593)--(3.017,1.547)--(3.200,1.548)--(3.349,1.565)%
  --(3.475,1.583)--(3.584,1.585)--(3.681,1.579)--(3.767,1.576);
\gppoint{gp mark 3}{(0.000,2.923)}
\gppoint{gp mark 3}{(0.567,2.014)}
\gppoint{gp mark 3}{(0.899,1.733)}
\gppoint{gp mark 3}{(1.317,1.536)}
\gppoint{gp mark 3}{(1.883,1.592)}
\gppoint{gp mark 3}{(2.215,1.549)}
\gppoint{gp mark 3}{(2.450,1.638)}
\gppoint{gp mark 3}{(2.782,1.593)}
\gppoint{gp mark 3}{(3.017,1.547)}
\gppoint{gp mark 3}{(3.200,1.548)}
\gppoint{gp mark 3}{(3.349,1.565)}
\gppoint{gp mark 3}{(3.475,1.583)}
\gppoint{gp mark 3}{(3.584,1.585)}
\gppoint{gp mark 3}{(3.681,1.579)}
\gppoint{gp mark 3}{(3.767,1.576)}
\gppoint{gp mark 3}{(2.251,2.311)}
\gpcolor{color=gp lt color border}
\gpsetlinetype{gp lt border}
\draw[gp path] (0.000,3.391)--(0.000,0.000)--(3.767,0.000)--(3.767,3.391)--cycle;
\gpdefrectangularnode{gp plot 1}{\pgfpoint{0.000cm}{0.000cm}}{\pgfpoint{3.767cm}{3.391cm}}
\end{tikzpicture}
  \caption{Size and compression efficiency $\nu$ over strings with Bernoulli
    distributed symbols as the function of parameter $p$.}
  \label{fig:bernoulli-string}
\end{figure}

\subsection{Storage Size}
\label{sec:num-entropy}

\begin{table*}
    \centering
    \caption{Results for $\XBWL$ and trie-folding on \emph{access},
      \emph{core}, and synthetic (\emph{syn.}) FIBs: name, number of prefixes
      $N$ and next-hops $\delta$; Shannon-entropy of the next-hop
      distribution $H_0$; FIB information-theoretic limit $I$, entropy $E$,
      and $\XBWL$ and prefix DAG size ($\pDAG$, $\lambda=11$) in KBytes;
      compression efficiency $\nu$; and bits/prefix efficiency for $\XBWL$
      ($\eta_{\XBWL}$) and trie-folding ($\eta_{\pDAG}$).}
    \label{tab:static}
    \begin{small}
      \renewcommand{\tabcolsep}{2.5pt}
      \renewcommand{\arraystretch}{1}
      \begin{tabular}{|c|l|ccc|cc|cc|c|cc|}
      \hline
      &FIB & $N$ & $\delta$ & $H_0$ & $I$ & $E$ & {\tiny $\XBWL$} & {\tiny $\pDAG$} & $\nu$ &
      {\tiny $\eta_{\tiny \XBWL}$} & {\tiny $\eta_{\pDAG}$}\\
      \hline
      \multirow{5}{.3em}{\rotatebox{90}{access}}
      & \texttt{taz}         & 410,513 & 4  & 1.00& 94 &  56&  63& 178&3.17&1.12 &  3.47\\
      & \texttt{hbone}       & 410,454 & 195& 2.00& 356& 142& 149& 396&2.78&1.05 &  7.71\\
      & \texttt{access(d)}   & 444,513 & 28 & 1.06& 206&  90& 100& 370&4.1 &1.12 &  6.65\\
      & \texttt{access(v)}   & 2,986   & 3  & 1.22& 2.8& 2.2& 2.5& 7.5&3.4 &1.13 & 20.23\\
      & \texttt{mobile}      & 21,783  & 16 & 1.08& 0.8& 0.4& 1.1& 3.6&8.71&2.36 &  1.35\\
      \hline
      \multirow{4}{.3em}{\rotatebox{90}{core}}
      & \texttt{as1221}      & 440,060 & 3  & 1.54&130 &115 &111 & 331&2.86 & 2.03 &  6.02\\
      & \texttt{as4637}      & 219,581 & 3  & 1.12& 52 & 41 & 44 & 129&3.13 & 1.62 &  4.69\\
      & \texttt{as6447}      & 445,016 & 36 & 3.91&375 &277 &277 & 748&2.7  &  5   & 13.45\\
      & \texttt{as6730}      & 437,378 & 186& 2.98&421 &209 &213 & 545&2.6  & 3.91 &  9.96\\
      \hline
      \multirow{2}{.3em}{\rotatebox{90}{syn.}}
      & \texttt{fib\_600k}   & 600,000  & 5  & 1.06&257 &157 &179 & 462&2.93 & 1.14 &  6.16\\
      & \texttt{fib\_1m}     & 1,000,000& 5  & 1.06&427 &261 &297 & 782&2.99 & 1.14 &  6.26\\
      \hline
    \end{tabular}
    \renewcommand{\arraystretch}{1}
  \end{small}
\end{table*}

Storage size results are given in Table~\ref{tab:static}.  Notably, real FIBs
that contain only a few next-hops compress down to about $60$--$150$ Kbytes
with $\XBWL$ at $1$--$2$ bit\slash prefix(!)  efficiency, and only about
$2$--$3$ times more with trie-folding.  This is chiefly attributed to the
small next-hop entropy, indicating the presence of a dominant next-hop.  Core
FIBs, on the other hand, exhibit considerably larger next-hop entropy, with
$\XBWL$ transforms in the range of $100$--$300$ and prefix DAGs in
$330$--$700$ KBytes.  Recall, however, that these FIBs exhibit unrealistic
next-hop distribution.  Curiously, even the extremely large FIB of 1 million
prefixes shrinks below $300$ Kbytes (800 KBytes with trie-folding).  In
contrast, small instances compress poorly, as it is usual in data
compression.  Finally, we observe that many \emph{FIBs show high next-hop
  regularity} (especially the real ones), reflected in the fact that entropy
bounds are $20$--$40$\% smaller than the information-theoretic limit.
$\XBWL$ \emph{very closely matches entropy bounds, with trie-folding off by
  only a small factor}.

We also studied compression ratios on synthetic FIBs, whose entropy was
controlled by us.  In particular, we re-generated the next-hops in
\texttt{access(d)} according to Bernoulli-distribution: a first next-hop was
set with probability $p$ and another with probability $1-p$.  Then, varying
$p$ in $[0, \tfrac{1}{2}]$ we observed the FIB entropy, the size of the
prefix DAG, and the compression efficiency $\nu$, i.e., the factor between
the two (see Fig.~\ref{fig:bernoulli-fib}). We found that \emph{the
  efficiency is around $3$ and}, in line with our theoretical analysis,
\emph{degrades as the next-hop distribution becomes extremely biased}.  This,
however, never occurs in reality (see again Table~\ref{tab:static}).  We
repeated the analysis in the string compression model: here, the FIB was
generated as a complete binary trie with a string of $2^{17}$ symbols written
on the leaves, again chosen by a Bernoulli distribution, and this was then
compressed with trie-folding (see Fig.~\ref{fig:bernoulli-string}, with
$\XBWL$ omitted). The observations are similar, with compression efficiency
again varying around $3$ and the spike at low entropy more
prominent\footnoteE{{\it Erratum:} Text updated to highlight that the
  compression efficiency in terms of the updated entropy measure has
  increased to $3$ for FIBs as well as for string compression, which is more
  in line with the theoretical bound.}.

\subsection{Lookup Complexity}
\label{sec:num-lookup}

\begin{table}
  \centering
  \caption{Lookup benchmark with $\XBWL$, prefix DAGs, \texttt{fib\_trie},
    and the FPGA implementation on \texttt{taz}: size, average and maximum
    depth; and million lookup per second, lookup time in CPU cycles, and
    cache misses per packet over random IP addresses (rand.) and addresses
    taken from the trace \cite{caida_trace} (trace).}
  \label{tab:lookup}
  \begin{small}
    \renewcommand{\tabcolsep}{2.5pt}
    \renewcommand{\arraystretch}{1}
    \begin{tabular}{|cl|ccc|c|}
      \hline
      && \multicolumn{3}{|c|}{Linux}  &  HW  \\
      \cline{3-6}
      && $\XBWL$  &  $\pDAG$ & \texttt{fib\_trie}  &  \texttt{FPGA}  \\
      \hline
      &\vline\hskip.3em size [Kbyte]      &     106  &     178  &    26,698  &  178   \\
      &\vline\hskip.3em average depth     &      --  &     3.7  &      2.42  &  --     \\
      &\vline\hskip.3em maximum depth     &      --  &      21  &         6  &  --     \\
      \hline
      \multirow{2}{.2em}{\rotatebox{90}{rand. }}
      &\vline\hskip.3em million lookup/sec &   0.033  &  12.8    &    3.23  & 6.9\\
      &\vline\hskip.3em CPU cycle/lookup   &   73940  &   194    &    771   & 7.1\\
      &\vline\hskip.3em cache miss/packet   &  0.016   &  0.003   &      3.17& --\\
      \hline
      \multirow{2}{.2em}{\rotatebox{90}{trace }}
      &\vline\hskip.3em million lookup/sec &   0.037  &  13,8    &     5.68 & 6.9\\
      &\vline\hskip.3em CPU cycle/lookup   &   67200  &  180     &     438  & 7.1\\
      &\vline\hskip.3em cache miss/packet   &   0.016  &  0.003  &      0.29 & --\\
      \hline
    \end{tabular}    
    \renewcommand{\arraystretch}{1}
  \end{small}
\end{table}

Finally, we tested IP lookup performance on real software and hardware
prototypes.  Our software implementations run inside the Linux kernel's IP
forwarding engine.  For this, we hijacked the kernel's network stack to send
IP lookup requests to our custom kernel module, working from a serialized
blob generated by the FIB encoders. Our $\XBWL$ lookup engine uses a kernel
port of the \texttt{RRR} bitstring index \cite{Raman:2002:SID:545381.545411}
and the Huffman-shaped \texttt{WaveletTree}
\cite{Ferragina:2007:CRS:1240233.1240243} from \texttt{libcds} \cite{libcds}.
Trie-folding was coded in pure \texttt{C}.  We used the standard trick to
collapse the first $\lambda=11$ levels of the prefix DAGs in the serialized
format \cite{Zec:2012:DTB:2378956.2378961}, as this greatly eases
implementation and improves lookup time with practically zero effect on
updates.  We also experimented with the Linux-kernel's stock
\texttt{fib\_trie} data structure, an adaptive level- and path-compressed
multibit trie-based FIB, as a reference implementation \cite{772439}.  Last,
we also realized the prefix DAG lookup algorithm in hardware, on a Xilinx
Virtex-II Pro 50 FPGA with 4.5 MBytes of synchronous SRAM representing the
state-of-the-art almost $10$ years ago.  The hardware implementation uses the
same serialized prefix DAG format as the software code.  All tests were run
on the \texttt{taz} instance.

For the software benchmarks we used the standard Linux network
micro-benchmark tool \texttt{kbench} \cite{nettesttools}, which calls the FIB
lookup function in a tight loop and measures the execution time with
nanosecond precision.  We modified \texttt{kbench} to take IP addresses from
a uniform distribution on $[0,2^{32}-1]$ or, alternatively, from a packet
trace in the ``CAIDA Anonymized Internet Traces 2012'' data set
\cite{caida_trace}. The route cache was disabled.
We also measured the rate of CPU cache misses by monitoring the
\texttt{cache-misses} CPU performance counter with the \texttt{perf(1)} tool.
For the hardware benchmark, we mirrored \texttt{kbench} functionality on the
FPGA, calling the lookup logic repeatedly on a list of IP addresses
statically stored in the SRAM and we measured the number of clock ticks
needed to terminate the test cycle. 

The results are given in Table~\ref{tab:lookup}. On the software side, the
most important observations are as follows.  The prefix DAG, taking only
about 180 KBytes of memory, is most of the time accessed from the cache,
while \texttt{fib\_trie} occupies an impressive $26$ MBytes and so it does
not fit into fast memory.  Thus, even though the number of memory accesses
needed to execute an IP lookup is smaller with \texttt{fib\_trie}, as most of
these go to slower memory the prefix DAG supports about three times as many
lookups per second.  Accordingly, \emph{not just that FIB space reduction
  does not ruin lookup performance, but it even improves it}. In other words,
there is no space-time trade-off involved here.  The address locality in real
IP traces helps \texttt{fib\_trie} performance to a great extent, as
\texttt{fib\_trie} can keep lookup paths to popular prefixes in cache.  In
contrast, the prefix DAG is pretty much insensitive to the distribution of
lookup keys.  Finally, we see that $\XBWL$ is a distant third from the tested
software lookup engines, suggesting that the constant in the lookup
complexity is indeed prohibitive in practice and that our lookup code
exercises some pretty pathologic code path in \texttt{libcds}.

The real potential of trie-folding is most apparent with our hardware
implementation.  The FPGA design executes a single IP lookup in just $7.1$
clock cycles on average, thanks to that the prefix DAG fits nicely into the
SRAM running synchronously with the logic.  This is enough to roughly $7$
million IP lookups per second even on our rather ancient FPGA board.  On a
modern FPGA or ASIC, however, with clock rates in the gigahertz range, our
results indicate that prefix DAGs could be scaled to hundreds of millions of
lookups per second at a terabit line speed.

We also measured packet throughput using the \texttt{udpflood}
macro-benchmark tool \cite{nettesttools}.  This tool injects UDP packets into
the kernel destined to a dummy network device, which makes it possible to run
benchmarks circumventing network device drivers completely.  The results were
similar as above, with prefix DAGs supporting consistently $2$--$3$ times
larger throughput than \texttt{fib\_trie}.

\section{Related Works}
\label{sec:related}

In line with the unprecedented growth of the routed Internet and the emerging
scalability concerns thereof \cite{potaroo, Zhao:2010:RSO:1878170.1878174,
  Khare:2010:ETG:1878170.1878183}, finding efficient FIB representations has
been a heavily researched question in the past and, judging from the
substantial body of recent work \cite{Han:2010:PGS:1851182.1851207,
  Uzmi:2011:SPN:2079296.2079325, Zec:2012:DTB:2378956.2378961,
  Liu:2012:EFC:2427036.2427039}, still does not seem to have been solved
completely.

Trie-based FIB schemes date back to the BSD kernel implementation of Patricia
trees \cite{Sklower91atree-based}. This representation consumes a massive 24
bytes per node, and a single IP lookup might cost $32$ random memory
accesses.  Storage space and search time can be saved on by expanding nodes'
strides to obtain a multibit trie \cite{752164}, see e.g., controlled prefix
expansion \cite{Srinivasan:1998:FIL:277858.277863,
  Ioannidis:2005:LCD:1114718.1648670}, level- and path-compressed tries
\cite{772439}, Lulea \cite{Degermark:1997:SFT:263105.263133}, Tree Bitmaps
\cite{Eatherton:2004:TBH:997150.997160} and successors
\cite{Song:2005:SST:1099544.1100365, Bando:2012:FBI:2369183.2369204}, etc.
Another approach is to shrink the routing table itself, by cleverly
relabeling the tree to contain the minimum number of entries (see ORTC and
derivatives \cite{draves:99, Uzmi:2011:SPN:2079296.2079325}). In our view,
trie-folding is complementary to these schemes, as it can be used in
combination with basically any trie-based FIB representation, realizing
varying extents of storage space reduction.

Further FIB representations include hash-based schemes
\cite{Waldvogel:1997:SHS:263105.263136, Bando:2012:FBI:2369183.2369204},
dynamic pipelining \cite{Hasan:2005:DPM:1080091.1080116}, CAMs \cite{253403},
Bloom-filters \cite{Dharmapurikar:2003:LPM:863955.863979}, binary search
trees and search algorithms \cite{DBLP:conf/infocom/GuptaPB00,
  Zec:2012:DTB:2378956.2378961}, massively parallelized lookup engines
\cite{Han:2010:PGS:1851182.1851207, Zec:2012:DTB:2378956.2378961}, FIB
caching \cite{Liu:2012:EFC:2427036.2427039}, and different combinations of
these (see the text book \cite{1200040}).  None of these come with
information-theoretic space bounds. Although next-hop entropy itself appears
in \cite{Uzmi:2011:SPN:2079296.2079325}, but no analytical evaluation ensued.
In contrary, $\XBWL$ and trie-folding come with \emph{theoretically proven}
space limits, and thus \emph{predicable memory footprint}.  The latest
reported FIB size bounds for >400K prefixes range from $780$ KBytes (DXR,
\cite{Zec:2012:DTB:2378956.2378961}) to $1.2$ Mbytes (SMALTA,
\cite{Uzmi:2011:SPN:2079296.2079325}).  $\XBWL$ improves this to just
$100$--$300$ Kbytes, which easily fits into today's SRAMs or can be realized
right in chip logic with modern FPGAs.

Compressed data structures have been in the forefront of theoretical computer
science research \cite{Hon:2010:CIR:1875737.1875761,
  Navarro:2007:CFI:1216370.1216372, Ferragina:2000:ODS:795666.796543, libcds,
  Makinen:2008:DES:1367064.1367072, Ziviani:2000:CKN:619057.621588,
  SilvadeMoura:2000:FFW:348751.348754, Raman:2002:SID:545381.545411,
  Ferragina:2007:CRS:1240233.1240243}, ever since Jacobson in his seminal
work \cite{63533} defined succinct encodings of trees that support
navigational queries in optimal time within information-theoretically limited
space.  Jacobson's bitmap-based techniques later found important use in FIB
aggregation \cite{Eatherton:2004:TBH:997150.997160,
  Song:2005:SST:1099544.1100365, Bando:2012:FBI:2369183.2369204}. With the
extensive use of bitmaps, $\XBWL$ can be seen as a radical rethinking of
these schemes, inspired by the state-of-the-art in succinct and compressed
data structures.

The basic idea of folding a labeled tree into a DAG is not particularly new;
in fact, this is the basis of many tree compacting schemes
\cite{KATAJAINEN:1990p202}, space-efficient ordered binary decision diagrams
and deterministic acyclic finite state automata
\cite{Bryant:1992:SBM:136035.136043}, common subexpression elimination in
optimizing compilers \cite{Cocke:1970:GCS:390013.808480}, and it has also
been used in FIB aggregation \cite{Ioannidis:2005:LCD:1114718.1648670,
  DBLP:conf/icnp/SongKHL09, 4694879} earlier.  Perhaps the closest to
trie-folding is Shape graphs \cite{DBLP:conf/icnp/SongKHL09}, where common
sub-trees, without regard to the labels, are merged into a DAG.  However,
this necessitates storing a giant hash for the next-hops, making updates
expensive especially considering that the underlying trie is leaf-pushed.
Trie-folding, in contrast, takes labels into account when merging and also
allows cheap updates.

\section{Conclusions}
\label{sec:conc}

With the rapid growth of the Web, social networks, mobile computing, data
centers, and the Internet routing ecosystem as a whole, the networking field
is in a sore need of compact and efficient data representations.  Today's
networking practice, however, still relies on ad-hoc and piecemeal data
structures for basically all storage sensitive and processing intensive
applications, of which the case of IP FIBs is just one salient example.

Our main goal in this paper was to advocate compressed data structures to the
networking community, pointing out that space reduction does not necessarily
hurt performance. Just the contrary: the smaller the space the more data can
be squeezed into fast memory, leading to faster processing.  This lack of
space-time trade-off is already exploited to a great extent in information
retrieval systems, business analytics, computational biology, and
computational geometry, and we believe that it is just too appealing not to
be embraced in networking as well. This paper is intended as a first step in
that direction, demonstrating the basic information-theoretic and algorithmic
techniques needed to attain entropy bounds, on the simple but relevant
example of IP FIBs.  Our techniques could then prove instructive in designing
compressed data structures for other large-scale data-intensive networking
applications, like OpenFlow and MPLS label tables, Ethernet self learning MAC
tables, BGP RIBs, access rules, log files, or peer-to-peer paths
\cite{madhyastha2009iplane}.

Accordingly, this paper can in no way be complete.  For instance, we
deliberately omitted IPv6 for brevity, even though storage burden for IPv6 is
getting just as pressing as for IPv4 \cite{Song:2005:SST:1099544.1100365}.
We see no reasons why our techniques could not be adapted to IPv6, but
exploring this area in depth is for further study.  Multibit prefix DAGs also
offer an intriguing future research direction, for their potential to reduce
storage space as well as improving lookup time from $O(W)$ to $O(\log W)$.
On a more theoretical front, FIB entropy lends itself as a new tool in
compact routing research, the study of the fundamental scalability of routing
algorithms.  We need to see why IP FIBs contain vast redundancy, track down
its origins and eliminate it, to enforce zero-order entropy bounds right at
the level of the routing architecture.  To what extent this argumentation can
then be extended to higher-order entropy is, for the moment, unclear at best.

\section*{Acknowledgements}
\label{sec:ack}

J.T. is with the MTA-Lendület Future Internet Research Group, and A. K. and
Z. H. are with the MTA-BME Information Systems Research Group.  The research
was partially supported by High Speed Networks Laboratory (HSN Lab),
J. T. was supported by the project TÁMOP - 4.2.2.B- 10/1--2010-0009, and G. R
by the OTKA/PD-104939 grant. The authors wish to thank Bence Mihálka, Zoltán
Csernátony, Gábor Barna, Lajos Rónyai, András Gulyás, Gábor Enyedi, András
Császár, Gergely Pongrácz, Francisco Claude, and Sergey Gorinsky for their
invaluable assistance, and to Jianyuan Lu \url{<lujy@foxmail.com>} for
pointing out the mistake regarding FIB entropy.

\vspace{.4em}
\begin{small}

\end{small}

\section*{Appendix}
\label{sec:proof_compact_data}

\begin{proof}[of Theorem~\ref{thm:proof_compact_data}]
  As $D(S)$ is derived from a complete binary tree the number of nodes
  $V^j_D$ of $D(S)$ at level $j$ is at most $\abs{V^j_D} \leq 2^j$, and each
  node at level $j$ corresponds to a $2^{W-j}$ long substring of $S$ so
  $\abs{V^j_D} \leq \delta^{2^{W-j}}$. Let $\kappa$ denote the intersection
  of the two bounds $2^{\kappa} = \delta^{2^{W-\kappa}}$, which gives:
  \begin{equation}\label{eq:kappa}
    \kappa 2^{\kappa} = 2^{W} \log_2(\delta) = n \log_2(\delta) \enspace .  
  \end{equation}
  Set the leaf push barrier at $\lambda=\lfloor \kappa \rfloor = \lfloor
  \frac1{\ln 2} \lambertW\left( n \ln \delta \right) \rfloor$ where
  $\lambertW()$ denotes the Lambert
  $W$-function. 
  The left side of Fig. \ref{fig:DAG-shape} is an illustration of the shape
  of DAG. Above level $\lambda$ we have
  $$\sum_{j=0}^{\lambda} \abs{V^j_D} = \sum_{j=0}^{\lambda} 2^j=
  2^{\lambda+1}-1\leq2\cdot2^{\kappa}$$ nodes; at level $\lambda+1$ we have
  $2^{\kappa}$ nodes at maximum; at $\lambda+2$ there are
  $\abs{V^{\lambda+2}_D} \leq \delta^{2^{W-\lambda-2}} \leq
  \sqrt{2^{\kappa}}$ nodes; and finally below level $\lambda+3$ we have an
  additional $\sqrt{2^{\kappa}}$ nodes at most as levels shrink as
  $\abs{V^{j+1}_D} \le \sqrt{\abs{V^{j}_D}}$ downwards in
  $D(S)$. 
  Finally, setting the pointer size at $\lceil \kappa \rceil$ bits and
  summing up the above yields that the size of $D(S)$ is at most
  \begin{multline*}
    \left(2 +2\left(1+ \frac{2}{\sqrt{2^{\kappa}}} \right)\right) \lceil
    \kappa
    \rceil 2^{\kappa} + \left(2\cdot2^\kappa+\delta\right)\log_2\delta\\
    =4 n \log_2(\delta) + o(n)
  \end{multline*}
  bits, using the fact that $\lceil \kappa \rceil 2^{\kappa}= n
  \log_2(\delta) + o(n)$ by \eqref{eq:kappa} and the number of labels stored
  in the DAG is at most $2\cdot2^\kappa$ above the leaf-push barrier and
  further $\delta$ below it.
\end{proof}

\begin{proof}[of Theorem~\ref{thm:proof_entropy}] 
  Let $E(|V^j_D|)$ denote the expected number of nodes at level $j$ of the
  DAG.  We shall use the following bounds on $E(|V^j_D|)$ to prove the claim:
  \begin{equation}\label{eq:vjd-ub}
    E(|V^j_D|) \leq \min\left\{2^{j}, \ \frac{H_0}{j} 2^{W} + 3,\ %
      \delta^{2^{W-j}} \right\} \enspace .
  \end{equation}

  Here, the first and the last bounds are from the proof of
  Theorem~\ref{thm:proof_compact_data}, while the second one is obtained
  below by treating the problem as a coupon collector's problem on the
  sub-tries of $D(T)$ at level $j$.  Suppose that we are given a set of
  coupons $C$, each coupon representing a string of length $2^{W-j}$ on the
  alphabet $\Sigma$ of size $\delta$ and entropy $H_0$, and we draw a coupon
  $o$ with probability $p_o: o \in C$.  Let $H_C = \sum_{o\in C} p_o \log_2
  \frac{1}{p_o} = H_0 2^{W-j}$, let $V$ denote the set of coupons after
  $m=2^{j}$ draws, and suppose $m\geq 3$.
  \begin{lemma}\label{lem:average_coupon}
    $E\left(\abs{V}\right) \leq \frac{m}{\log_2(m)}H_C+3$.
  \end{lemma}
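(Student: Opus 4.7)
The plan is to start from linearity of expectation, writing $E(|V|)=\sum_{o\in C}\Pr[o\text{ appears at least once in }m\text{ draws}]=\sum_{o\in C}\bigl(1-(1-p_o)^m\bigr)$, and then bound the summand termwise by the contribution of $o$ to $H_C$, namely $g(p_o):=p_o\log_2(1/p_o)$. The workhorses are the two elementary inequalities $1-(1-p)^m\le 1$ and $1-(1-p)^m\le mp$.

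I would partition $C$ according to the value of $p_o$. First peel off \emph{heavy} coupons with $p_o>1/2$: since $\sum_o p_o=1$ there is at most one such coupon, contributing at most $1$ to $E(|V|)$ and hence at most $1$ of the additive constant. For the remaining coupons with $p_o\le 1/2$, split further into \emph{light} ($p_o<1/m$) and \emph{medium} ($1/m\le p_o\le 1/2$). On the light range, the Bernoulli bound together with $\log_2(1/p_o)>\log_2 m$ yields
\[
1-(1-p_o)^m \le mp_o \;<\; \frac{m}{\log_2 m}\,p_o\log_2(1/p_o) \;=\; \frac{m}{\log_2 m}\,g(p_o).
\]
On the medium range, $g$ is unimodal with maximum at $1/e$, so on $[1/m,1/2]$ its minimum is $\min\bigl(g(1/m),g(1/2)\bigr)=\min\bigl(\tfrac{\log_2 m}{m},\tfrac12\bigr)$. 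For $m$ sufficiently large (concretely $m\ge 5$), this minimum equals $\log_2(m)/m$, so the trivial bound gives $1-(1-p_o)^m\le 1\le \tfrac{m}{\log_2 m}g(p_o)$; summing the medium and light contributions then yields $\tfrac{m}{\log_2 m}\sum_{o:p_o\le 1/2} g(p_o)\le \tfrac{m}{\log_2 m}H_C$. Adding the contribution from the single heavy coupon produces $E(|V|)\le \tfrac{m}{\log_2 m}H_C+O(1)$.

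The main obstacle is controlling the additive constant carefully enough to land at $3$ rather than a worse number. The trickiest part is the medium range for the small cases $m\in\{3,4\}$, where the minimum of $g$ on $[1/m,1/2]$ is attained at $p=1/2$ and equals $1/2$ instead of $\log_2(m)/m$; there the clean inequality $1\le \tfrac{m}{\log_2 m}g(p_o)$ no longer holds uniformly. I would handle these two $m$-values either by the trivial uniform bound $E(|V|)\le m\le 4$ (which is itself dominated by the intended $\tfrac{m}{\log_2 m}H_C+3$ whenever $H_C$ is not too small) or by a direct check of $1-(1-p)^m\le \tfrac{m}{\log_2 m}\,g(p)$ at $p=1/2$ for these $m$; the extra room in the additive $3$ (beyond the $1$ from the heavy coupon) is exactly what absorbs this small-$m$ slack.
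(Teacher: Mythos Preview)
Your proposal is correct and follows the same overall skeleton as the paper: expand $E(|V|)=\sum_o\bigl(1-(1-p_o)^m\bigr)$, bound termwise by $\tfrac{m}{\log_2 m}\,p_o\log_2(1/p_o)$ for ``small'' $p_o$, and absorb the few heavy coupons into the additive constant. The case split is also the same---both you and the paper distinguish $p_o<1/m$ from $p_o\ge 1/m$. The differences are in the details. The paper puts the heavy threshold at $1/e$ rather than $1/2$; this costs one more exceptional coupon (at most two, still covered by the~$+3$) but makes the ``medium'' inequality $\tfrac{m}{\log_2 m}\,g(p_o)\ge 1$ hold uniformly for every $m\ge 3$ via monotonicity of $m/\log_2 m$ on $[e,\infty)$, so no special-casing of small~$m$ is needed. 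Conversely, your ``light'' argument---combining $1-(1-p)^m\le mp$ with $\log_2(1/p_o)>\log_2 m$---is considerably more elementary than the paper's, which substitutes $m=p_o^{-x}$ and pushes through an algebraic reduction to $(1-p_o)^{1/p_o}\ge(1-1/p_o^{x-1})^{p_o^{x-1}}$. Your route is the simpler one; the paper's is slightly more uniform. (Incidentally, your medium bound already works for $m\ge 4$, not only $m\ge 5$, since $g(1/4)=g(1/2)=\tfrac12$; only $m=3$ genuinely needs the trivial estimate $E(|V|)\le m=3$, which is immediate from the~$+3$.)
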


  Using this Lemma, we have that the expected number of nodes at the $j$-th
  level of $D(S)$ is at most $E(\abs{V^j_D}) \leq \frac{2^{j}}{\log_2(2^{j})}
  H_0 2^{W-j} +3 = \frac{H_0}{j}n+3 = \frac{H_0}{j} 2^{W} + 3$, which
  coincides with the second bound in \eqref{eq:vjd-ub}.  Note that here $2^j$
  is an increasing function of $j$, while both $H_0 2^{W} / j + 3$ and
  $\delta^{2^{W-j}}$ are monotone decreasing functions.

  Using these bounds, we divide the DAG into three parts (the ``head'',
  ``body'', and ``tail'') as illustrated at the right side of
  Fig.~\ref{fig:DAG-shape}.  Let $\xi$ denote the intersection of the first
  two upper bounds, let $\zeta$ be that of the latter two and let $\kappa$ be
  the level where $2^j$ and $\delta^{2^{W-j}}$ meet.  It is easy too see,
  that the relation between these three values can only be $\xi \le
  \kappa\leq \zeta$ or $\xi \ge \kappa\ge \zeta$.  We discuss these two cases
  separately.

  \noindent\emph{Case 1:}  $\xi \le \kappa\leq \zeta$.
  
  The three parts of the DAG are as follows (again, see the right side of
  Fig.~\ref{fig:DAG-shape}):
  \begin{description}
  \item[\emph{head}] for levels $0,\dots,\lfloor\xi\rfloor$;
  \item[\emph{body}] for levels $\lfloor\xi\rfloor+1,\dots,\lceil\zeta\rceil-1$;
  \item[\emph{tail}] for levels $\lceil\zeta\rceil,\dots,W$.
  \end{description}

  In the following, we give upper bounds on the expected number of the nodes
  in the head, the tail, and the body of the DAG.  Set the leaf-push barrier
  at $\lambda=\lfloor \xi \rfloor$.
  
  \begin{figure}
    \small{
      \begin{tikzpicture}[thick, scale=0.10]
        \node at (52,92) {level $0$};
        \node at (52,39) {level $W$};
        \node at (101,48) {$\zeta$};
        \node at (26,50) {$\kappa$};
        \node at (101,60) {$\xi$};
        \node at (70.5,66) {$E(|V^j_D|) \leq 2^{j}$};
        \node at (71,55) {$E(|V^j_D|) \leq \frac{H_0}{j} 2^{W} + 3$};
        \node at (73.5,45) {$E(|V^j_D|) \leq \delta^{2^{W-j}}$};
        \node at (88,79.5) {head};
        \node at (88,51.5) {body};
        \node at (88,40.5) {tail};
        \draw (68,39) .. controls (61.23,44.41) and (51.30,47.72) .. (41.00,49.11) .. controls (37.70,49.56) and (34.38,49.80) .. (32,50);
        \draw (68,39) -- (72,39);
        \draw (72,39) .. controls (79.00,43.60) and (88.89,46.75) .. (95.5,48);
        \draw (96.5,60) -- (95.5,48);
        \draw[dashed] (28,50) -- (80,50);
        \draw[dotted] (60,60) -- (100,60);
        \draw[dotted] (60,48) -- (100,48);
        \draw (69.5,92) .. controls (60,69.5) and (50,60.5) .. (32.5,50);
        \draw (69.5,92) .. controls (81.5,69.5) and (86.5,65) .. (96.5,60);
      \end{tikzpicture}}
    \caption{The shape of the DAG as divided into three parts with bounds on
      the expected number of nodes at each level.}
    \label{fig:DAG-shape}
  \end{figure}
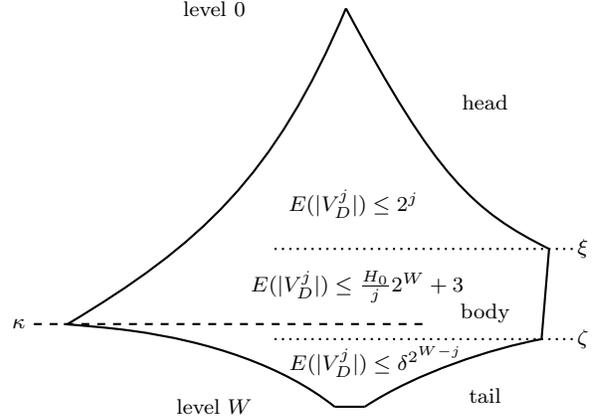
  
  First, the expected number of nodes in the head is
  \begin{multline}\label{size-head-node}
    \sum_{j=0}^{\lfloor\xi\rfloor}  E(|V^j_D|) \leq \sum_{j=0}^{\lfloor\xi\rfloor}  2^{j} = 2^{\lfloor\xi\rfloor+1}-1\\
    <2 \cdot 2^{\xi} = 2\left( \frac{H_0}{\xi} 2^{W} + 3\right) =2 M
  \end{multline}
  where $M = \left( \frac{H_0}{\xi} 2^{W} + 3\right)$.

  Second, for the size of the tail we have
  \begin{align}\label{size-tail-node}
    \sum_{j=\lceil\zeta\rceil}^{W}& E(|V^j_D|) \leq
    \sum_{j=\lceil\zeta\rceil}^{W} \delta^{2^{W-j}} =
    \sum_{j=1}^{W-\lceil\zeta\rceil} \delta^{2^{j}} \notag\\
    &=\delta^{2^{W-\lceil\zeta\rceil}} + \sum_{j=1}^{W-\lceil\zeta\rceil-1}
    \delta^{2^{j}}
    < \delta^{2^{W-\zeta}} + \sum_{i=1}^{2^{W-\lceil\zeta\rceil-1}}  \delta^{i}\notag\\
    &= \delta^{2^{W-\zeta}} + \delta^{2^{W-\lceil\zeta\rceil-1}+1}-2
    < \delta^{2^{W-\zeta}} + \delta \sqrt{\delta^{2^{W-\zeta}}}\notag\\
    &= \delta^{2^{W-\zeta}} \left(1 +
      \frac{\delta}{\sqrt{\delta^{2^{W-\zeta}}}}
    \right)=\left(\frac{H_0}{\zeta} 2^{W} + 3 \right) \left(1 + \epsilon_1
    \right)\notag\\
    &< \left(\frac{H_0}{\xi} 2^{W} + 3 \right) \left(1 + \epsilon_1 \right) =
    M(1+\epsilon_1) \enspace ,
  \end{align}
  where
  \begin{displaymath}
    \epsilon_1 = \frac{\delta}{\sqrt{\delta^{2^{W-\zeta}}}} \le
    \frac{\delta}{\sqrt{\delta^{\frac{W-\log_2\nicefrac{W}{H_0}}{\log_2\delta}}}}=\frac{\delta\sqrt{\nicefrac{W}{H_0}}}{2^{W/2}}
    \enspace , 
  \end{displaymath}
  which tends to zero if $W$ goes to infinity.

  Third, for the number of nodes in the body we write
  \begin{multline*}
    \sum_{j=\lfloor\xi\rfloor+1}^{\lceil\zeta\rceil-1} E(|V^j_D|) \leq
    \sum_{j=\lfloor\xi\rfloor+1}^{\lceil\zeta\rceil-1} \left(\frac{H_0}{j}
      2^{W} + 3 \right) <\\
    < \sum_{j=\lfloor\xi\rfloor+1}^{\lceil\zeta\rceil-1}
    \left(\frac{H_0}{\xi} 2^{W} + 3 \right)
    =(\lceil\zeta\rceil-1-\lfloor\xi\rfloor) M\\
    \le (\zeta-(\xi-1)) M = (1+\zeta-\xi) M \enspace .
  \end{multline*}

  \begin{lemma}\label{lem:bounds_on_levels}
    The following bounds on $\xi$, $\zeta$, and $\kappa$ apply:
    \begin{eqnarray}
      \xi &\geq & W-\log_2\nicefrac{W}{H_0} \enspace ,\label{eq:bound_on_xi}\\
      \zeta &\leq& W - \log_2 \left(W-\log_2 \nicefrac{W}{H_0} \right)+\log_2\log_2(\delta) \enspace ,\label{eq:bound_on_zeta}\\
      \kappa&\le&  W-\log_2\left(W-\log_2W\right)+\log_2\log_2\delta \enspace .\label{eq:bound_on_kappa}
    \end{eqnarray}
  \end{lemma}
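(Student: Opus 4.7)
\textbf{Proof proposal for Lemma~\ref{lem:bounds_on_levels}.} The plan is to treat each of the three bounds as an inequality obtained by ``plugging in'' a candidate value into the defining equation for $\xi$, $\zeta$, or $\kappa$, and then invoking monotonicity. All three thresholds arise as the crossing of two functions that are both monotone in the level $j$, so to upper-bound a crossing it suffices to show that the faster-decreasing function has already dropped below the other at the candidate level, and symmetrically for a lower bound. I would carry out the three bounds in the order $\xi$, $\kappa$, $\zeta$, since the first two are clean algebraic identities and the third follows by combining them.

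First, for \eqref{eq:bound_on_xi}, recall that $\xi$ satisfies $2^{\xi}=\tfrac{H_0}{\xi}2^{W}+3$, and the map $j\mapsto 2^{j}-\tfrac{H_0}{j}2^{W}-3$ is strictly increasing on the range of interest. Set $j^{\ast}=W-\log_{2}(W/H_0)$; then $2^{j^{\ast}}=H_0 2^{W}/W$ and, since $j^{\ast}\le W$, one has $\tfrac{H_0}{j^{\ast}}2^{W}+3\ge H_0 2^{W}/W=2^{j^{\ast}}$. Thus the crossing $\xi$ has not yet occurred at $j^{\ast}$, giving $\xi\ge j^{\ast}$. Second, for \eqref{eq:bound_on_kappa}, use the defining identity $\kappa\,2^{\kappa}=2^{W}\log_{2}\delta$ from \eqref{eq:kappa} and the monotonicity of $j\mapsto j\,2^{j}$. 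Plugging in $j^{\ast\ast}=W-\log_{2}(W-\log_{2}W)+\log_{2}\log_{2}\delta$ gives $2^{j^{\ast\ast}}=2^{W}\log_{2}\delta/(W-\log_{2}W)$, and since $\log_{2}(W-\log_{2}W)\le\log_{2}W$ together with $\log_{2}\log_{2}\delta\ge 0$ yields $j^{\ast\ast}\ge W-\log_{2}W$, we obtain $j^{\ast\ast}\,2^{j^{\ast\ast}}\ge 2^{W}\log_{2}\delta$, so by monotonicity $\kappa\le j^{\ast\ast}$.

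Third, for \eqref{eq:bound_on_zeta}, recall that $\zeta$ is where the decreasing function $\delta^{2^{W-j}}$ (which decreases doubly exponentially) falls below the decreasing function $\tfrac{H_0}{j}2^{W}+3$; it therefore suffices to exhibit a level at which the former is already $\le$ the latter. Take $j^{\ast\ast\ast}=W-\log_{2}(W-\log_{2}(W/H_0))+\log_{2}\log_{2}\delta$. A direct substitution gives $2^{W-j^{\ast\ast\ast}}=(W-\log_{2}(W/H_0))/\log_{2}\delta$, hence $\delta^{2^{W-j^{\ast\ast\ast}}}=2^{W-\log_{2}(W/H_0)}=H_0 2^{W}/W$. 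On the other hand, $j^{\ast\ast\ast}\le W$ in the nontrivial regime (otherwise the bound is implied by the trivial $\zeta\le W$), so $\tfrac{H_0}{j^{\ast\ast\ast}}2^{W}+3\ge H_0 2^{W}/W$, and the crossing has already happened, giving $\zeta\le j^{\ast\ast\ast}$.

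The main obstacle I anticipate is purely bookkeeping: keeping straight which of the two curves at each crossing is the ``faster'' one so as to deduce the correct direction of inequality, and checking that the small additive $+3$ term and the degeneracy when $j^{\ast\ast\ast}>W$ or $\log_{2}(W/H_0)\ge W$ are handled by simply falling back on the trivial bound $\zeta\le W$. No intricate calculus is required; the proof is essentially three applications of the ``evaluate and compare'' template above.
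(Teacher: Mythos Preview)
Your proposal is correct and follows essentially the same ``plug in the candidate level and compare the two curves'' template as the paper's own proof; the only cosmetic difference is that for \eqref{eq:bound_on_kappa} you work with the product form $\kappa\,2^{\kappa}=2^{W}\log_{2}\delta$ while the paper equivalently checks $\delta^{2^{W-l}}\le 2^{l}$ at the candidate level. If anything you are slightly more explicit than the paper about the monotonicity direction and the degenerate case $j^{\ast\ast\ast}>W$, which the paper silently ignores.
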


  Using (\ref{eq:bound_on_xi}) and (\ref{eq:bound_on_zeta}) for the body we
  write
  \begin{align}\label{size-body-node}
    &\phantom{=}\sum_{j=\lfloor\xi\rfloor+1}^{\lceil\zeta\rceil-1}E(|V^j_D|) \leq(1+\zeta-\xi) M\notag\\
    &\le\left(1+ \log_2 \left( \frac{W}{H_0} \right) - \log_2
      \left({\frac{W-\log_2 \left( \frac{W}{H_0}
            \right)}{\log_2(\delta)}}\right) \right) M\notag\\
    &=\left(1 + \log\log_2(\delta)-\log_2H_0 + \epsilon_2\right) M,
  \end{align}
  where
  \begin{displaymath}
    \epsilon_2=\log_2  \frac{W}{W -\log_2 \left( \frac{W}{H_0} \right)}
    \xrightarrow[W\to\infty]{} 0 \enspace .
  \end{displaymath}

  Choose the pointer size to $\lceil\kappa\rceil$ bits, using that the DAG
  contains at most $2^{\lceil\kappa\rceil}$ nodes at its broadest level.  For
  the head we need one pointer for each node, while for the rest we need
  two. Summing up with \eqref{size-tail-node}, \eqref{size-head-node}, and
  \eqref{size-body-node} we get the following bound on the number of
  pointers:
  \begin{multline*}
    \left(2 + 2(1+\epsilon) +2\left(1-\log_2
        H_0+\log\log_2(\delta)+\epsilon_2\right)\right)M \\= \left(6+2\log_2
      \frac{\log_2\delta}{H_0}+2\left(\epsilon_1+\epsilon_2\right)\right)M
    \enspace .
  \end{multline*}

  We have to store labels above the barrier and at the bottom level, which is
  at most $2M+\delta$ labels, hence the average number of the bits is at most
  \begin{displaymath}
    \left(6+2\log_2
      \frac{\log_2\delta}{H_0}+2\left(\epsilon_1+\epsilon_2\right)\right)M\lceil\kappa\rceil+\left(2M+\delta\right)\log_2\delta
    \enspace . 
  \end{displaymath}

  Using \eqref{eq:bound_on_xi} for $\xi$ and \eqref{eq:bound_on_kappa} for
  $\kappa$ we have
  \begin{align*}
    &\phantom{=}M\lceil\kappa\rceil =
    \lceil\kappa\rceil\left(\frac{H_0}{\xi}2^W+3\right)\le
    \frac{\kappa+1}{\xi}H_0 n + 3\lceil\kappa\rceil\\
    &\le\left( \frac{W-\log_2\frac{W-\log_2 W}{\log_2\delta}+1}{W-\log_2\nicefrac{W}{H_0}}\right)H_0n + 3\lceil\kappa\rceil=\\
    &= \left(1+\frac{\log_2\nicefrac{W}{H_0}-\log_2\frac{W-\log_2
          W}{\log_2\delta}+1}{W-\log_2\nicefrac{W}{H_0}}\right)H_0n +
    3\lceil\kappa\rceil\\
    &=H_0n+o(n) \enspace.
  \end{align*}

  In summary, for the expected size of the DAG we get $(6-2\log_2 H_0 +
  2\log\log_2(\delta) )H_0 n+o(n)$ bits, since
  $\left(2M+\delta\right)\log_2\delta=o(n)$.

  \noindent\emph{Case 2:} $\zeta\le\kappa\le \xi$

  In this case the DAG contains only the head and tail parts. According to
  Theorem \ref{thm:proof_compact_data} we get the upper bound $5 n
  \log_2\delta+o(n)$ on the number of bits.  As $\kappa\le\xi$, we have
  \begin{displaymath}
    \frac{n\log_2\delta}{\kappa}=2^\kappa\le\frac{H_0}{\kappa}n+3\enspace.
  \end{displaymath}
  So the upper bound on the number of required bits is
  $5n\log_2\delta+o(n)\le 5 H_0 n+o(n) < (6 +
  2\log\left(\log_2(\delta)/H_0\right) )H_0 n+o(n)$.
\end{proof}

 \begin{proof}[of Lemma~\ref{lem:average_coupon}]
    The probability of having coupon $o$ in $V$ is $\Pr(o\in
    V)=1-(1-p_o)^{m}$ and so the expected cardinality of $V$ is $E(|V|)=
    \sum_{o\in C} (1-(1-p_o)^m)$.  The right-hand side of the statement of
    the Lemma is $\frac{m}{\log_2(m)} \sum_{o\in C} p_o \log_2 \frac{1}{p_o}
    +3$. The claim holds if $\forall o\in C$:
    \begin{equation}\label{eachp}
      p_o< \frac 1e \Rightarrow 1-(1-p_o)^m \leq \frac{m}{\log_2(m)} p_o \log_2 \frac{1}{p_o} \enspace .
    \end{equation}
    First, assume $m\geq\frac1{p_o}$. As the right hand size is a monotone
    increasing function of $m$ when $e \le \frac1{p_o}\le m$:
    \begin{small}
      \begin{equation*}
        1-(1-p_o)^m \leq 1 = \frac{1/p_o}{\log_2(1/p_o)} p_o \log_2
        \frac{1}{p_o} \leq \frac{m}{\log_2(m)} p_o \log_2 \frac{1}{p_o}
        \enspace .
      \end{equation*}
    \end{small}
    Otherwise, if $m<\frac1{p_o}$ then let $x=\log_{\frac1{p_o}} m$. Note
    that $0 < x < 1$.  After substituting $m=\frac1{p_o^x}$ we have
    \begin{small}
      \begin{multline*}
        1-(1-p_o)^{\frac1{p_o^x}} \leq
        \frac{\frac1{p_o^x}}{\log_2\left({\frac1{p_o^x}}\right)} p_o \log_2
        \left(\frac{1}{p_o}\right) =
         \\
         = \frac{\frac1{p_o^x}}{x
         \log_2\left({\frac1{p_o}}\right)} p_o \log_2
         \left(\frac{1}{p_o}\right) = \frac{\frac1{p_o^x}}{x} p_o =
        \frac1{xp_o^{x-1}} \enspace .
      \end{multline*}
    \end{small}
     which can be reordered as
     \begin{equation*} 
       (1-p_o)^{\frac1{p_o^x }}\geq 1 - \frac1{xp_o^{x-1}} \enspace.
     \end{equation*}
     Taking the $p_o^{x-1}>0$ power of both sides we get
     \begin{equation*} \label{reducedeq} (1-p_o)^{\frac1{p_o}}\geq \left({1 -
           \frac1{xp_o^{x-1}}}\right)^{p_o^{x-1}} .
     \end{equation*}
     Using that $x<1$ and so $\frac1x>1$, we see that the above holds if
     \begin{displaymath}
    (1-p_o)^{\frac1{p_o}}\geq \left({1 - \frac1{p_o^{x-1}}}\right)^{p_o^{x-1}}.
     \end{displaymath}
     Note that $(1-p_o)^{\frac1{p_o}}$ is monotone decreasing function, thus
     the inequality holds if $p_o \leq \frac1{p_o^{x-1}}$, but this is true
     because $p_o^x \leq 1$.  This proves \eqref{eachp} under the assumption
     $p_o < \frac{1}{e}$. Note also that there are at most $3>\frac1e$
     coupons for which \eqref{eachp} cannot be applied.
   \end{proof}

   \begin{proof}[of Lemma~\ref{lem:bounds_on_levels}]
     To prove \eqref{eq:bound_on_xi}, for level $l = W-\log_2
     \nicefrac{W}{H_0}$ we have
     \begin{equation*} \label{k-bounds}
       2^{l} = 2^{W-\log_2  \nicefrac{W}{H_0} } \leq 2^{W-\log_2 
         \nicefrac{W}{H_0}} = 2^W\frac{H_0}{W}  < \frac{H_0}{l} 2^{W} + 3
       \enspace .
     \end{equation*}
 
     For \eqref{eq:bound_on_zeta}, at level $l = W - \log_2
     \left(\frac{W-\log_2 \nicefrac{W}{H_0} }{\log_2\delta} \right)$ we have
     \begin{multline*} \label{l-bounds} \delta^{2^{W-l}} = \delta^{
         \frac{W-\log_2 \left( \frac{W}{H_0} \right)}{\log_2(\delta)} }=
       \delta^{\frac{\log_2 \left( \frac{H_0}{W} 2^W\right)}{\log_2(\delta)}
       }\\
       =\frac{H_0}{W} 2^W < \frac{H_0}{l} 2^{W} + 3 \enspace .
  \end{multline*}
  
  Finally, for \eqref{eq:bound_on_kappa} at level
  $l=W-\log_2\left(\frac{W-\log_2W}{\log_2\delta}\right)$ we write
  \begin{displaymath}
    \delta^{2^{W-l}}=\delta^\frac{W-\log_2W}{\log_2\delta}=2^{W-\log_2W}\le2^{l}
    \enspace . 
  \end{displaymath}

  This completes the proof.
\end{proof}

\end{document}